\def\eqref#1{equation~\ref{#1}}
\def\1{\bm{1}}
\def\rc{{\textnormal{c}}}
\def\rw{{\textnormal{w}}}
\def\rvx{{\mathbf{x}}}
\def\rvz{{\mathbf{z}}}
\DeclareMathAlphabet{\mathsfit}{\encodingdefault}{\sfdefault}{m}{sl}
\SetMathAlphabet{\mathsfit}{bold}{\encodingdefault}{\sfdefault}{bx}{n}
\def\gJ{{\mathcal{J}}}
\def\gR{{\mathcal{R}}}
\theoremstyle{plain}
\newtheorem{theorem}{Theorem}[section]
\newtheorem{lemma}[theorem]{Lemma}
\theoremstyle{definition}
\theoremstyle{remark}
\newcommand{\modelname}{{\textsc{{MUSE}}}\xspace}
\definecolor{mydarkblue}{rgb}{0,0.08,0.45}
\definecolor{blue-color}{RGB}{102, 153, 204}
\definecolor{lightpink}{RGB}{255,230,230}
\definecolor{lightgreen}{RGB}{225,255,225}
\definecolor{lightblue}{RGB}{230,240,255}
\definecolor{lightpurple}{RGB}{217,217,255}
\definecolor{lightyellow}{RGB}{255,251,226}
\definecolor{perfblue}{RGB}{64, 114, 175}
\title{\modelname: Model-Agnostic Tabular Watermarking via Multi-Sample Selection}
\author{
Liancheng Fang\textsuperscript{1},
~~ Aiwei Liu\textsuperscript{2}\thanks{Corresponding author.},
~~ Henry Peng Zou\textsuperscript{1},
~~ Yankai Chen\textsuperscript{1}, \\
~~ \textbf{Hengrui Zhang\textsuperscript{1}},
~~ \textbf{Zhongfen Deng\textsuperscript{1}},
~~ \textbf{Philip S. Yu\textsuperscript{1}} \\
\textsuperscript{1}University of Illinois Chicago \quad
\textsuperscript{2}Tsinghua University \\
{\tt\small lfang87@uic.edu, liuaw20@mails.tsinghua.edu.cn, psyu@uic.edu}
}
\begin{document}
\maketitle

\begin{abstract}
    We introduce \modelname, a watermarking algorithm for tabular generative models. Previous approaches typically leverage DDIM invertibility to watermark tabular diffusion models, but tabular diffusion models exhibit significantly poorer invertibility compared to other modalities, compromising performance. Simultaneously, tabular diffusion models require substantially less computation than other modalities, enabling a multi-sample selection approach to tabular generative model watermarking. \modelname embeds watermarks by generating multiple candidate samples and selecting one based on a specialized scoring function, without relying on model invertibility. Our theoretical analysis establishes the relationship between watermark detectability, candidate count, and dataset size, allowing precise calibration of watermarking strength. Extensive experiments demonstrate that \modelname achieves state-of-the-art watermark detectability and robustness against various attacks while maintaining data quality, and remains compatible with any tabular generative model supporting repeated sampling, effectively addressing key challenges in tabular data watermarking. Specifically, it reduces the distortion rates on fidelity metrics by $\mathbf{81-89}\%$, while achieving $\mathbf{1.0}$ TPR@0.1\%FPR detection rate. Implementation of \modelname can be found at \url{https://github.com/fangliancheng/MUSE}.

\end{abstract}

\section{Introduction}
\label{sec:introduction}

\begin{wrapfigure}[17]{r}{0.5\textwidth}
\vspace{-19pt}
\centering
\includegraphics[width = 1.0\linewidth]{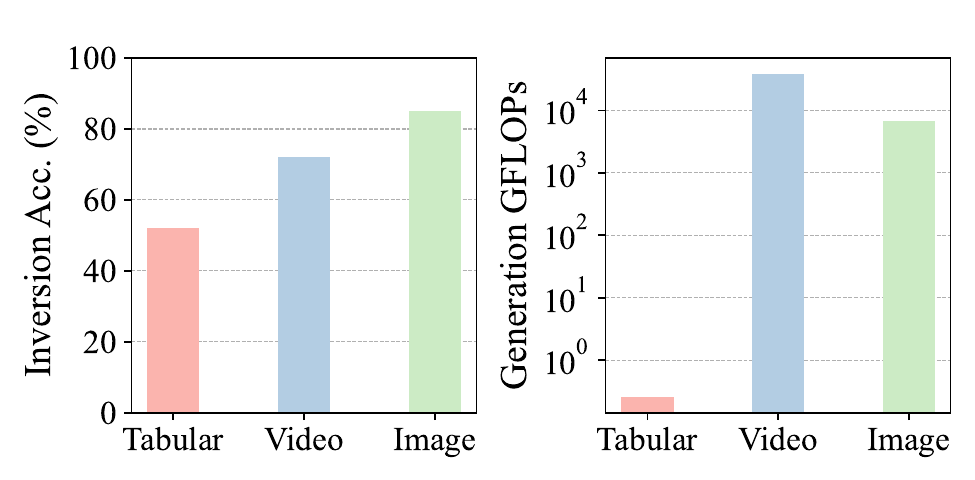}
\vspace{-15pt}
\caption{\textbf{Left}: Tabular diffusion models exhibit the lowest inversion accuracy (bit accuracy) when compared to video and image diffusion models. \textbf{Right}: Tabular diffusion models require much fewer generation GFLOPs than video and image diffusion models. Models used: TabSyn~\citep{tabsyn} (tabular), Stable Diffusion~\citep{sd,svd} (image/video).}
\label{fig:teaser}
\end{wrapfigure}

The rapid development of tabular generative models~\citep{tabddpm,tabmt,tbart,tabsyn,tabdiff,tabdar,tabgen_icl} has significantly advanced synthetic data generation capabilities for structured information. These breakthroughs have enabled the creation of high-quality synthetic tables for applications in privacy preservation, data augmentation, and missing value imputation~\citep{diffputer,hernandez2022synthetic, fonseca2023tabular, assefa2020generating}.
However, this advancement concurrently raises serious concerns about potential misuse, including data poisoning~\citep{padhi2021tabular} and financial fraud~\citep{cartella2021adversarial}. To address these risks, watermarking techniques have emerged as a pivotal technique. By embedding imperceptible yet robust signatures into synthetic data, watermarking facilitates traceability, ownership verification, and misuse detection~\citep{liu2024survey}.

Earlier works on tabular data watermarking~\citep{tabularmark, wgtd} utilize \textbf{\textit{edit-based watermarking}}, embedding signals by modifying table values. However, this approach has a fundamental limitation with tabular data: direct value alterations, especially in columns with discrete or categorical data, \textbf{can easily corrupt information} or render entries invalid. For instance, such edits might introduce non-existent categories or generate values outside permissible numerical ranges, significantly compromising data integrity.
Recently, \textbf{\textit{generative watermarking}} has emerged as an alternative approach for tabular data, drawing from successful techniques in diffusion models for images and videos~\citep{wu2025survey, gaussian_shading, tree_ring_watermark,hu2025videoshield}. This approach exploits the reversibility of DDIM samplers by setting patterned initial Gaussian noise and measuring its correlation with noise recovered through the inverse process. TabWak~\citep{tabwak} applies this concept to tabular diffusion models. Unlike edit-based watermarking, this method maintains better generation quality since the watermark is embedded within noise patterns that closely resemble Gaussian distributions, minimizing impact on the generated content.

However, watermarking tabular diffusion models is \textbf{significantly more challenging} than for image and video diffusion models. This stems from the \textbf{substantially lower accuracy of DDIM inverse processes} in tabular diffusion models, as shown in \Cref{fig:teaser} (left). When using the same Gaussian shading algorithm \citep{gaussian_shading}, tabular modality exhibits the lowest reversibility accuracy. This challenge arises because tabular diffusion models incorporate multiple additional algorithmic components that are difficult to reverse, such as quantile normalization \citep{wikipedia_quantile_normalization} and Variational Autoencoders (VAEs) \citep{vae} used in TabSyn\citep{tabsyn}. During watermark detection, the entire data processing pipeline must be inverted to recover the watermark signal, but this process accumulates errors as precisely reversing each step is often difficult or impossible. Key challenges in the inversion process include: (1) inverting quantile normalization is inherently problematic as this transformation is non-injective; (2) VAE decoder inversion relies on optimization methods without guarantees of perfect implementation. Due to limitations in tabular DDIM inversion accuracy, watermark detectability becomes highly dependent on model implementation, severely restricting its application scope and practical utility.

In this work, we introduce \modelname, a \textbf{model-agnostic} watermarking algorithm for tabular data that operates without relying on the invertibility of diffusion models.
A key insight enabling our approach is that tabular data generation demands \textbf{significantly less computation} than image or video generation, as shown in \Cref{fig:teaser} (right). This computational efficiency makes a multi-sample selection process practical.
\modelname leverages this by generating $m$ candidate samples for each data row. A watermark is embedded by selecting one candidate based on a score function, which is calculated using values from specific columns.
Because it does not depend on model invertibility, \modelname is broadly applicable to any tabular generative model. We also show that our method can maintain the original data distribution (distortion-free) when using certain column selection techniques. Furthermore, we establish a mathematical relationship between detectability, the number of candidates ($m$), and the number of rows ($N$), ensuring reliable watermark detection.

With these mechanisms, \modelname achieves high watermark detectability while preserving the generation quality of the underlying model, as validated across diverse tabular datasets and evaluation metrics. Moreover, \modelname exhibits strong robustness against a wide range of post-generation attacks specific to tabular data.
Importantly, \modelname is model-agnostic and compatible with any tabular generative model supporting repeated sampling. This flexibility allows \modelname to integrate seamlessly with sampling-efficient diffusion models, such as TabSyn~\citep{tabsyn}, thereby effectively mitigating the computational overhead introduced by repeated sampling processes.
We summarize the main contributions of this paper as follows:
\begin{itemize}[leftmargin=*]
\item We propose tabular watermarking via \underline{mu}lti-\underline{s}ample s\underline{e}lection (\modelname), a novel generative watermarking method for tabular data that completely avoids the inversion process in the data processing and sampling pipeline.
\item We provide a theoretical analysis of the detectability of \modelname, which enables precise calibration of watermarking strength under a given detection threshold.
\item Extensive experiments across multiple tabular datasets demonstrate that \modelname consistently achieves state-of-the-art performance in generation quality, watermark detectability, and robustness against various tabular-specific attacks.
\end{itemize}

\section{Preliminaries}
\label{sec:preliminary}



\paragraph{Tabular Generative Models.}
A tabular dataset with $N$ rows and $M$ columns consists of $i.i.d.$ samples $(\rvx_i)_{i=1}^N$ drawn from an unknown joint distribution $p(\mathbf{x})$, where each $\mathbf{x}_i \in \mathbb{R}^M$ (or mixed-type space) represents a data row with $M$ features. A tabular generative model aims to learn a parameterized distribution $p_{\theta}(\mathbf{x}) \approx p(\mathbf{x})$ to generate new realistic samples.

\paragraph{Watermark for Tabular Generative Models.}  Tabular watermark involves two main functions. 
\begin{enumerate}[leftmargin=2em,label=\arabic*.]
    \item \textbf{Generate}: Given a secret watermark key $k$, this function produces a watermarked table. Similar to standard generation, each row of this table is sampled i.i.d., but from a distribution $p(\mathbf{x}, k)$. 
    \item \textbf{Detect}: Provided with a table and a specific key $k$, this function examines the table to determine if it carries the watermark associated with that particular key.
\end{enumerate}


\paragraph{Threat Model.}
We consider the following watermarking protocol between three parties: the tabular data provider, the user, and the detector.
\begin{enumerate}[leftmargin=2em,label=\arabic*.]
    \item The tabular data provider shares a watermark key $k$ with the detector.
    \item The user asks the tabular data provider to generate a table $T$.
    \item The user publishes a table $T'$, which can either be an (edited version of the) original table $T$ or an independent table.
    \item The detector determines whether the table $T'$ is watermarked or not.
\end{enumerate}

\section{Proposed Method}
\label{sec:methods}

In this section, we introduce \modelname, a model-agnostic watermarking method for tabular generative models. Considering the low reverse accuracy of tabular diffusion models and fast generation compared to other modalities, our method is based on multi-sample selection instead of DDIM inversion.
We first introduce the overall watermark generation process in \Cref{sec:watermark_overview} and then introduce the Watermark Scoring Function and Column Selection Mechanism in  \Cref{sec:watermark_score} and \Cref{sec:adaptive_column_selection} respectively, followed by the detailed watermark detection in \Cref{sec:wm_detection}.

\begin{figure*}[!t]
    \centering
    \includegraphics[width=1.0\textwidth]{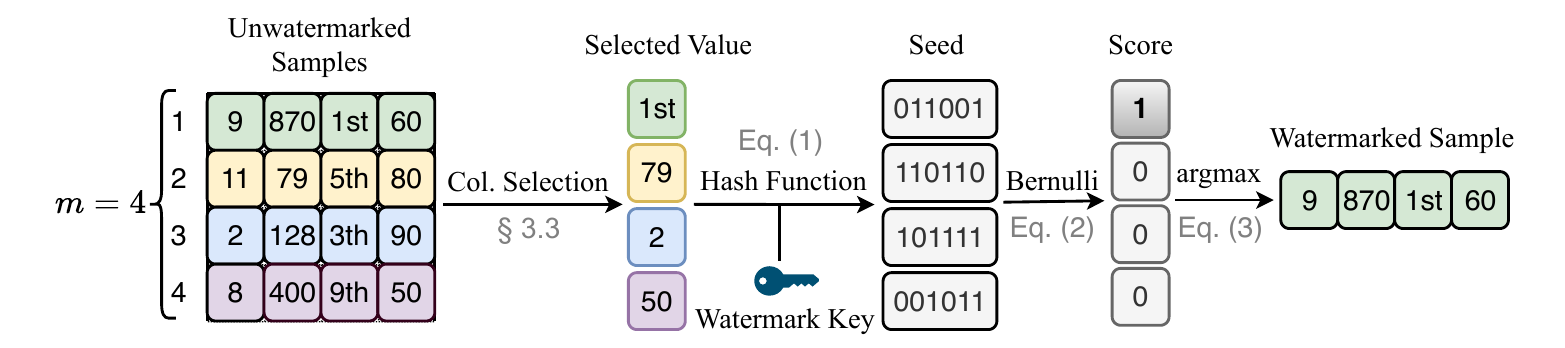}
    \caption{An overview of the \modelname watermark generation process. \modelname operates by generating multiple samples and selecting the highest-scoring sample (ties are broken randomly). The selected row is appended to the watermarked table, while others are discarded.}
    \vspace{-0.4cm}
    \label{fig:method_pipeline}
\end{figure*}

\subsection{Watermark Generation Framework}
\label{sec:watermark_overview}

We define the overall generation process of our \modelname method in this section. The generation of each watermarked row can be decomposed into the following two steps:

\fbox{%
   \begin{minipage}{\textwidth}
\begin{enumerate}[leftmargin=2em,label=\arabic*.,itemsep=0pt]
\item Sample $m$ candidate rows from the tabular generative model $p(\rvx)$.
\item Apply a \textit{watermark scoring function} $s_k(\rvx)$ to each candidate using watermark key $k$ and select the \textbf{highest-scoring candidate} as the watermarked row. Details of the watermark scoring function will be introduced in section \ref{sec:watermark_score}.
\end{enumerate}
\end{minipage}
}

Then we could repeat the above process $N$ times to generate the watermarked table.  In practice, the selection procedure can be parallelized across the $N$ groups since each group contains $i.i.d.$
samples. The overall process is illustrated in \Cref{fig:method_pipeline}.
\paragraph{Discussions.}
The advantages of this approach are twofold: (1) \textit{Model-agnostic}: since we do not modify the row generation process of the tabular generative model, the method is applicable to any tabular generative model that supports repeated sampling. (2) \textit{Distribution-preserving}: we demonstrate in subsequent sections that the watermark remains unbiased under specific watermark scoring functions.

\subsection{Watermark Scoring Function}
\label{sec:watermark_score}

In this section, we detail the design of our watermark scoring function, denoted as $s(\rvx, k)$, which involves two key steps as follows: 

\fbox{%
   \begin{minipage}{\dimexpr\textwidth-2\fboxsep-2\fboxrule\relax} 
\begin{enumerate}[leftmargin=2em,label=\arabic*.,itemsep=0pt]
    \item \textbf{Column Selection Mechanism}:
    Given an input sample $\rvx$ which consists of $M$ columns, the first step involves selecting a specific subset of $n$ columns from these $M$ available columns. Let $\pi(\rvx)$ represent the operation that extracts the values from this chosen subset of $n$ columns of $\rvx$. The specific strategies for how these $n$ columns are selected (e.g., adaptive or fixed methods) will be further elaborated in \Cref{sec:adaptive_column_selection}. 

    \item \textbf{Score Generation}:
    The second step takes the selected column values $\pi(\rvx)$ from the previous stage and the predefined watermark key $k$ as inputs. These are processed by a cryptographic hash function $H(\cdot)$ to produce a deterministic hash value:
    \begin{equation} \label{eq:random_seed_generation_revised}
        h = H(\pi(\rvx), k).
    \end{equation}
    This resulting hash value $h$ is then used to seed or as an input to a pseudorandom function $f$. This function $f$ maps the hash $h$ to a scalar score $s \in [0,1]$. In our implementation, we define $f$ as a Bernoulli distribution with a mean of $0.5$:
\begin{equation}  \label{eq:scoring_function}
    s_k(\rvx_i) = f(h_i), \quad f \sim \operatorname{Bernoulli}(0.5).
\end{equation}
    The rationale behind choosing a Bernoulli(0.5) distribution is discussed in \Cref{lem:optimal_scoring_distribution}.
\end{enumerate}
\end{minipage}
}

\paragraph{Discussion on Scoring Design.}
An important question is why our scoring function incorporates column selection rather than simply using $H(k)$ as the score. The key insight is that content-dependent scoring through $\pi(\rvx)$ makes each row's selection appear \textbf{pseudo-random} while still respecting the original data distribution $p(\rvx)$. If we used a fixed score based solely on the watermark key, the multi-sample selection process would become arbitrary, potentially biasing the resulting distribution away from $p(\rvx)$.

\subsection{Column Selection Mechanism}
\label{sec:adaptive_column_selection}

In this section, we introduce two strategies for the column selection procedure $\pi(\cdot)$ used in our watermark scoring function (\Cref{eq:random_seed_generation_revised}):

\fbox{%
   \begin{minipage}{\dimexpr\textwidth-2\fboxsep-2\fboxrule\relax}
\begin{enumerate}[leftmargin=2em,label=\arabic*.,itemsep=0pt]
    \item \textbf{Fixed Column Selection}: 
    A straightforward approach that uses a predefined set of columns for all samples, regardless of data characteristics.
    
    \item \textbf{Adaptive Column Selection}:
    A dynamic approach that selects columns based on properties of each individual sample, as detailed below.
\end{enumerate}
\end{minipage}
}

For the adaptive strategy, we leverage each sample's statistical properties. Unlike sequential language models that naturally provide randomness~\citep{kgw,synthid,unigram}, tabular generation is often non-sequential. Therefore, we use the sample itself as a source of entropy.

Specifically, we select columns based on how sample $\rvx$ deviates from the training data distribution $T$. For each column $j \in \{1, \dots, M\}$, we compute the empirical quantile rank:
\begin{equation} \label{eq:quantile_rank}
    r_j = \frac{1}{N} \sum_{i=1}^N \mathbbm{1}\{T_{i,j} \le \rvx_j\},
\end{equation}
where $N$ is the number of training samples.
We then select columns (typically 3) corresponding to the smallest, median, and largest $r_j$ values, resulting in an index set $\mathcal{J} = \pi(\rvx)$ for hash computation.

\paragraph{Discussion.} The adaptive column selection offers several advantages: (1) \textit{Enhanced Security}: By varying columns between samples, it prevents attackers from targeting specific columns; (2) \textit{Increased Diversity}: The diversity in column selection improves the uniformity of hash values, minimizing distortion in watermarked data; (3) \textit{Improved Robustness}: By linking selection to the data's empirical distribution, the method better withstands post-processing attacks that don't significantly alter the overall statistical properties. The complete \modelname watermark generation algorithm is presented in \Cref{alg:wm_generation}.

\paragraph{Repeated Column Masking.} While our column selection mechanism introduces diversity, achieving \textbf{true pseudo-randomness} requires an additional safeguard. We adopt a repeated column masking strategy that detects when previously selected column values reappear. In such cases, the scoring function assigns a random score, thereby skipping watermark embedding for that instance. This mechanism is inspired by the \textit{repeated key masking} technique used in LLM watermarking \citep{hu2023unbiased, synthid}. Further implementation details and experimental results are provided in Section~\ref{sec:ablation_study}.

\subsection{Watermark Detection}
\label{sec:wm_detection}
\paragraph{Detection Statistic.}
Since \modelname biases the selection process toward high-scoring samples, the sum of scores for a watermarked table is expected to be higher than that of an unwatermarked table. Given a (watermarked or unwatermarked) table $T$ consists of $N$ rows: $T := (\rvx_1, \ldots, \rvx_N)$. We detect watermark by computing the mean score:
\begin{equation} \label{eq:wm_detection_statistic}
    S(T) = \frac{1}{N} \sum_{i=1}^N s_k(\rvx_i).
\end{equation}
We then compare with the mean score $S(T_{\mathrm{no\text{-}wm}})$ of an unwatermarked table $T_{\mathrm{no\text{-}wm}}$. We conclude that $T$ is watermarked if $S(T) > S(T_{\mathrm{no\text{-}wm}})$.

\paragraph{Calibrating the Number of Repeated Samples.}
Given the detection statistic \Cref{eq:wm_detection_statistic}, we move on to show how the detectability of \modelname depends on 
(1) the number of watermarked samples $N$ and (2) the number of repeated samples $m$. 

\begin{restatable}{lemma}{falsePositiveRateBound}
    \label{lem:fpr_bound}
    Denote a watermarked table as $T_{\mathrm{wm}}$ and an unwatermarked table as $T_{\mathrm{no\text{-}wm}}$, each consisting of $N$ rows. Let $\rvx \sim p(\rvx)$ be a random variable drawn from the data distribution, and let $\rvx_1, \dots, \rvx_m$ be i.i.d. samples from $p(\rvx)$. Define $\mu_{\mathrm{no\text{-}wm}} = \mathbb{E}_{\rvx \sim p(\rvx)}[s_k(\rvx)]$ as the expected score of an unwatermarked sample, and define $\mu_{\mathrm{wm}}^{m} = \mathbb{E}_{\rvx_i \sim p(\rvx)}\left[ \max_{i \in [m]} s_k(\rvx_i) \right]$ as the expected score of a watermarked sample obtained via $m$ repeated samples. Suppose the scoring function satisfies $s_k(\cdot) \in [0,1]$, then, the False Positive Rate (FPR) of the watermark detection satisfies:
    \begin{equation} \label{eq:fpr_bound}
        \Pr\left(S(T_{\mathrm{no\text{-}wm}}) > S(T_{\mathrm{wm}})\right) \le 
        \exp\left( -\frac{N (\mu_{\mathrm{wm}}^{m} - \mu_{\mathrm{no\text{-}wm}})^2}{2} \right).
    \end{equation}
\end{restatable}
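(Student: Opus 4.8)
The plan is to recognize both detection statistics as empirical means of i.i.d.\ bounded scores and then invoke a one-sided Hoeffding bound on their difference. By the selection rule in \Cref{sec:watermark_overview}, each row of $T_{\mathrm{wm}}$ carries the score of the highest-scoring of its $m$ candidates, so $S(T_{\mathrm{wm}})=\frac{1}{N}\sum_{i=1}^N b_i$ with $b_i=\max_{j\in[m]}s_k(\rvx_{i,j})\in[0,1]$ i.i.d.\ and $\mathbb{E}[b_i]=\mu_{\mathrm{wm}}^{m}$; likewise $S(T_{\mathrm{no\text{-}wm}})=\frac{1}{N}\sum_{i=1}^N a_i$ with $a_i=s_k(\rvx_i)\in[0,1]$ i.i.d.\ and $\mathbb{E}[a_i]=\mu_{\mathrm{no\text{-}wm}}$. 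Since the two tables are generated by independent sampling, the families $\{a_i\}$ and $\{b_i\}$ are independent, which is the structural fact that makes the whole argument go through.

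First I would record the elementary fact that $\mu_{\mathrm{wm}}^{m}\ge\mu_{\mathrm{no\text{-}wm}}$, because $\max_{j\in[m]}s_k(\rvx_{i,j})\ge s_k(\rvx_{i,1})$ pointwise and $\rvx_{i,1}$ has the unwatermarked law; hence $\Delta:=\mu_{\mathrm{wm}}^{m}-\mu_{\mathrm{no\text{-}wm}}\ge 0$ and the FPR event is genuinely a large-deviation event in the correct direction. Next I would form the difference statistic $D:=S(T_{\mathrm{no\text{-}wm}})-S(T_{\mathrm{wm}})$ and, using independence, rewrite it as a single empirical mean of i.i.d.\ terms $D=\frac{1}{N}\sum_{i=1}^N c_i$ with $c_i:=a_i-b_i$. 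Each $c_i$ lies in $[-1,1]$ and satisfies $\mathbb{E}[c_i]=-\Delta$.

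With this reformulation the target event becomes $\{S(T_{\mathrm{no\text{-}wm}})>S(T_{\mathrm{wm}})\}=\{D>0\}=\{D-\mathbb{E}[D]>\Delta\}$. Applying Hoeffding's inequality to the $N$ i.i.d.\ variables $c_i\in[-1,1]$ (range $2$) then gives $\Pr\!\left(D-\mathbb{E}[D]\ge\Delta\right)\le\exp\!\big(-2N\Delta^2/2^2\big)=\exp\!\big(-N\Delta^2/2\big)$, which is exactly the claimed bound after substituting $\Delta=\mu_{\mathrm{wm}}^{m}-\mu_{\mathrm{no\text{-}wm}}$.

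I expect the only real subtlety to be bookkeeping of the Hoeffding constant. There is a genuine choice of coupling: pairing $a_i$ with $b_i$ treats each $c_i$ as a width-$2$ variable and yields the stated $\exp(-N\Delta^2/2)$, whereas applying Hoeffding directly to all $2N$ underlying width-$1/N$ contributions exploits the independence of $a_i$ and $b_i$ and in fact gives the sharper $\exp(-N\Delta^2)$. Since the lemma only asserts the former, the pairing route is the most economical, and I would make explicit both the independence of $T_{\mathrm{wm}}$ and $T_{\mathrm{no\text{-}wm}}$ (which justifies the pairing) and the hypothesis $s_k(\cdot)\in[0,1]$ (which fixes the range entering the exponent), as these are precisely the assumptions the stated constant relies on.
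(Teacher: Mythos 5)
Your proposal is correct and follows essentially the same route as the paper's proof: pair the unwatermarked and watermarked scores row by row, observe that each difference lies in $[-1,1]$ with mean $-(\mu_{\mathrm{wm}}^{m}-\mu_{\mathrm{no\text{-}wm}})$, and apply a one-sided Hoeffding bound with range $2$ to obtain $\exp\left(-N(\mu_{\mathrm{wm}}^{m}-\mu_{\mathrm{no\text{-}wm}})^2/2\right)$. Your added observations — that $\mu_{\mathrm{wm}}^{m}\ge\mu_{\mathrm{no\text{-}wm}}$ ensures the event is a genuine large deviation, and that treating the $2N$ underlying scores separately would sharpen the constant to $\exp(-N\Delta^2)$ — are both accurate refinements the paper does not spell out, but the core argument is the same.
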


\begin{restatable}[Optimal Scoring Distribution]{lemma}{optimalScoringDistribution}
    \label{lem:optimal_scoring_distribution}
    Let $s_k(\rvx)$ be any random variable supported on $[0,1]$ with mean $0.5$, the right-hand-side of \Cref{eq:fpr_bound}
    is minimized when $s_k(\rvx)$ follows a $\text{Bernoulli}(0.5)$  distribution.
\end{restatable}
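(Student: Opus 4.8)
The plan is to reduce the minimization of the bound to a clean extremal problem over distributions on $[0,1]$ and then settle it with a single convexity argument. First I would observe that the right-hand side of \Cref{eq:fpr_bound} equals $\exp\!\left(-N(\mu_{\mathrm{wm}}^{m} - \mu_{\mathrm{no\text{-}wm}})^2/2\right)$, where both means are taken over distributions of $s_k(\rvx)$ that are all constrained to have mean $0.5$. Hence $\mu_{\mathrm{no\text{-}wm}} = 0.5$ is fixed across all candidate scoring distributions, and since $\mu_{\mathrm{wm}}^{m} = \mathbb{E}[\max_{i\in[m]} s_k(\rvx_i)] \ge \mathbb{E}[s_k(\rvx_1)] = 0.5 = \mu_{\mathrm{no\text{-}wm}}$ (the max dominates any single coordinate pointwise), the gap is always nonnegative. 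Because $t \mapsto \exp(-N t^2/2)$ is strictly decreasing for $t \ge 0$, minimizing the bound is exactly equivalent to maximizing $\mu_{\mathrm{wm}}^{m}$.

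Next I would rewrite $\mu_{\mathrm{wm}}^{m}$ through the CDF $F(t) = \Pr(s_k(\rvx) \le t)$. By independence, the maximum $M_m = \max_{i} s_k(\rvx_i)$ obeys $\Pr(M_m \le t) = F(t)^m$, so the tail-integral formula for a $[0,1]$-valued variable gives $\mu_{\mathrm{wm}}^{m} = \int_0^1 \left(1 - F(t)^m\right)\,dt$. The same formula recasts the mean constraint as $\int_0^1 (1 - F(t))\,dt = 0.5$, i.e. $\int_0^1 F(t)\,dt = 0.5$. Maximizing $\mu_{\mathrm{wm}}^{m}$ is therefore equivalent to minimizing $\int_0^1 F(t)^m\,dt$ subject to $\int_0^1 F(t)\,dt = 0.5$ and $F$ a valid CDF on $[0,1]$.

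The crux is one application of Jensen's inequality: viewing $t$ as uniform on $[0,1]$ and $F$ as a $[0,1]$-valued random variable, convexity of $x \mapsto x^m$ yields $\int_0^1 F(t)^m\,dt \ge \left(\int_0^1 F(t)\,dt\right)^m = 2^{-m}$, whence $\mu_{\mathrm{wm}}^{m} \le 1 - 2^{-m}$. Since $x^m$ is strictly convex for $m \ge 2$, equality forces $F$ to be constant almost everywhere; combining this with the CDF regularity conditions (nondecreasing, right-continuous, $F(1) = 1$) and the integral constraint pins the constant value to $0.5$ on $[0,1)$ with a unit jump at $1$, which is precisely the CDF of $\mathrm{Bernoulli}(0.5)$. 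A direct check that Bernoulli$(0.5)$ attains $\mu_{\mathrm{wm}}^{m} = 1 - 2^{-m}$ then confirms it is the (unique) minimizer of the bound.

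The main obstacle is handling atoms carefully: I must verify that the identity $\Pr(M_m \le t) = F(t)^m$ and the tail-integral expressions for both $\mu_{\mathrm{wm}}^{m}$ and the mean remain valid for arbitrary, possibly atomic, distributions on $[0,1]$, and that the Jensen equality condition ``$F$ constant a.e.'' is translated through the CDF constraints into the specific two-atom law rather than some other constant-valued profile. I would also flag the degenerate case $m = 1$, for which $x^m$ is linear and the bound is independent of the scoring distribution, separately.
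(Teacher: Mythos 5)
Your proposal is correct and follows essentially the same route as the paper: rewrite $\mu_{\mathrm{wm}}^{m}$ and the mean constraint via the tail-integral formula for the CDF $F$, then apply Jensen's inequality to reduce the extremal problem to the two-atom Bernoulli law (the paper phrases this as maximizing $\int_0^1\bigl(F(x)-F(x)^m\bigr)\,dx$ using concavity of $x-x^m$, which under the fixed constraint $\int_0^1 F = 0.5$ is the same inequality as your minimization of $\int_0^1 F(t)^m\,dt$ via convexity of $x^m$). Your additional care with the Jensen equality case gives uniqueness of the minimizer for $m\ge 2$, which the paper does not claim but which is a harmless strengthening.
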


\begin{restatable}[Minimum Watermarking Signal]{theorem}{minimumWatermarkingSignal}
    \label{thm:optimal_number_of_repeated_samples}
    Under the same assumptions as in \Cref{lem:fpr_bound}, suppose the scoring function $s_k(\rvx)$ is instantiated as a hash-seeded pseudorandom function such that $s_k(\rvx) \sim \text{Bernoulli}(0.5)$, the FPR is upper-bounded by:
    \begin{equation} \label{eq:fpr_bound_optimal}   
        \Pr\left(S(T_{\mathrm{no\text{-}wm}}) > S(T_{\mathrm{wm}})\right) \le 
        \exp\left( -\frac{N}{2} \left(0.5 - 0.5^m\right)^2 \right).
    \end{equation}
    To ensure the FPR does not exceed a target threshold $\alpha$, it suffices to set the number of repeated samples $m$ as:
    \begin{equation} \label{eq:optimal_m}
        m = \max\left(2, \left\lceil \log_{0.5} \left(0.5 - \sqrt{\tfrac{2 \log(1/\alpha)}{N}} \right) \right\rceil\right),
    \end{equation}
    where $\lceil \cdot \rceil$ denotes the ceiling function. This expression is valid when $N > 8 \log(1/\alpha)$.
\end{restatable}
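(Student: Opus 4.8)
The plan is to specialize the general FPR bound from \Cref{lem:fpr_bound} to the Bernoulli$(0.5)$ scoring function and then invert the resulting inequality to solve for the smallest admissible $m$.

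First I would evaluate the two mean quantities under the assumption $s_k(\rvx) \sim \text{Bernoulli}(0.5)$. The unwatermarked mean is immediate: $\mu_{\mathrm{no\text{-}wm}} = \mathbb{E}[s_k(\rvx)] = 0.5$. For the watermarked mean $\mu_{\mathrm{wm}}^m = \mathbb{E}[\max_{i \in [m]} s_k(\rvx_i)]$, I would use that the maximum of $m$ i.i.d. Bernoulli$(0.5)$ variables equals $1$ unless all $m$ draws are $0$, an event of probability $0.5^m$. Hence $\mu_{\mathrm{wm}}^m = 1 - 0.5^m$, and the gap becomes $\mu_{\mathrm{wm}}^m - \mu_{\mathrm{no\text{-}wm}} = 0.5 - 0.5^m$. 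Substituting this gap into the bound \eqref{eq:fpr_bound} yields \eqref{eq:fpr_bound_optimal} directly.

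Next I would invert the bound to meet the target FPR $\alpha$. Setting $\exp(-\tfrac{N}{2}(0.5 - 0.5^m)^2) \le \alpha$, taking logarithms, and rearranging gives $(0.5 - 0.5^m)^2 \ge \tfrac{2\log(1/\alpha)}{N}$. Since $0.5 - 0.5^m > 0$ for $m \ge 2$, I can take the positive square root to obtain $0.5^m \le 0.5 - \sqrt{2\log(1/\alpha)/N}$. The one subtlety here is that applying $\log_{0.5}(\cdot)$ reverses the inequality because $\log_{0.5}$ is strictly decreasing, turning the upper bound on $0.5^m$ into a lower bound on $m$; taking the ceiling and intersecting with the structural requirement $m \ge 2$ then produces \eqref{eq:optimal_m}.

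Finally I would establish the stated validity condition: expression \eqref{eq:optimal_m} is only well-defined when the argument of the logarithm is positive, i.e. $0.5 - \sqrt{2\log(1/\alpha)/N} > 0$, which rearranges to $N > 8\log(1/\alpha)$. I do not anticipate a genuine obstacle, as the argument is essentially a direct moment computation followed by an algebraic inversion. The only points requiring care are tracking the inequality direction through the decreasing map $\log_{0.5}$ and confirming that the positivity of $0.5 - 0.5^m$ (equivalently $m \ge 2$) is what licenses both the choice of the positive square root and the well-definedness of the logarithm.
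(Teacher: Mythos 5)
Your proposal is correct and follows essentially the same route as the paper: compute $\mu_{\mathrm{no\text{-}wm}} = 0.5$ and $\mu_{\mathrm{wm}}^m = \mathbb{E}[\max(s_1,\dots,s_m)] = 1 - 0.5^m$ (the max of $m$ i.i.d.\ Bernoulli$(0.5)$ draws is $0$ only when all draws are $0$), substitute the gap $0.5 - 0.5^m$ into the bound of \Cref{lem:fpr_bound}, and then invert algebraically to solve for $m$. In fact your write-up is slightly more complete than the paper's, which stops after deriving \eqref{eq:fpr_bound_optimal} and leaves the inversion yielding \eqref{eq:optimal_m} and the validity condition $N > 8\log(1/\alpha)$ implicit; your careful tracking of the inequality reversal under the decreasing map $\log_{0.5}$ and of the positivity of the logarithm's argument fills exactly that gap.
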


\begin{wrapfigure}[13]{r}{0.32\textwidth}
    \centering
    \includegraphics[width = 1.0\linewidth]{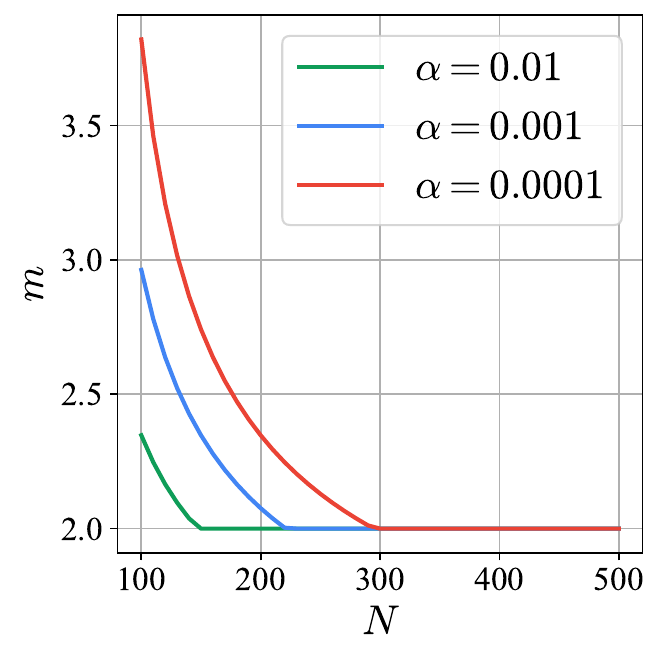}
    \vspace{-21pt}
    \caption{$m$ vs. $N$ under different $\alpha$ values (smoothed).}
    \label{fig:m_vs_N_alpha}
\end{wrapfigure}

\Cref{thm:optimal_number_of_repeated_samples} enables \modelname to calibrate the number of repeated samples $m$ to achieve a target false positive rate with theoretical guarantees. 
This allows the method to embed \textit{just enough} watermarking signal to ensure the desired detectability. Intuitively, since no redundant watermarking signal is embedded, the impact of watermarking on the generation quality is minimal. 
In \Cref{fig:m_vs_N_alpha}, we plot $m$ as a function of table size $N$ for various target FPRs, based on \Cref{eq:optimal_m} (omitting the ceiling operation for clarity). We observe that $m$ quickly saturates as $N$ increases. For instance, to achieve a 0.01\% FPR, $m=2$ suffices when $N \ge 300$, and even for $N=100$, $m=4$ is enough. In the rest of the paper, \modelname's $m$ is set by \Cref{eq:optimal_m} unless otherwise specified.

\begin{algorithm}[!t]
    \caption{\modelname Watermark Generation}
    \label{alg:wm_generation}
    \begin{algorithmic}[1]
        \STATE \textbf{Input:} watermark key $k$, unwatermarked table $T \in \mathbb{R}^{N \times M}$, False Positive Rate $\alpha$
        \STATE Compute the number of repeated samples $m$ based on $N$ and $\alpha$ via \Cref{eq:optimal_m}
        \STATE Randomly split rows of $T$ into $N/m$ groups: $(\mathcal{G}_i)_{i=1}^{N/m}$, each containing $m$ rows
        \STATE Initialize an empty set $\gR$, and a list $T_{wm}$ to store the watermarked table
        \FOR{$i \leftarrow 1$ to $N/m$}
            \STATE $\rvx_1, \dots, \rvx_m \leftarrow \mathcal{G}_i$
            \FOR[Adaptive column selection.]{$t \in \{1, \dots, m\}$}
                \STATE \textbf{Let} $\rvx \leftarrow \rvx_t$
                \STATE For each column $j$, compute quantile rank $r_j$ of $\rvx_j$ in $T[:,j]$ \COMMENT{See \Cref{eq:quantile_rank}.}
                \STATE Sort $\{r_j\}_{j=1}^M$ and identify column indices with min, median, and max ranks
                \STATE Let $\gJ_t$ be the set of selected column indices
            \ENDFOR
            \FOR[Multi-sample selection.]{$t \in \{1, \dots, m\}$} 
                \STATE $r_t = \operatorname{hash}(k, \rvx_t[\gJ_t])$
                \STATE Seed Bernoulli distribution with $r_t$
                \STATE $s_t \sim \operatorname{Bernoulli}(0.5)$
            \ENDFOR
            \STATE $i \leftarrow \operatorname{arg\,max}_{t \in \{1, \dots, m\}} s_t$ 
            \STATE Append $\rvx_i$ to $T_{wm}$
        \ENDFOR
        \STATE \textbf{return} $T_{wm}$
    \end{algorithmic}
\end{algorithm}

\section{Experiments}
\label{sec:experiments}

\begin{table}[t]
    \caption{Watermark generation quality and detectability, \raisebox{0.5ex}{\colorbox{blue!30}{\quad}} indicates best performance, \raisebox{0.5ex}{\colorbox{blue!10}{\quad}} indicates second-best performance. For clarity, only our method is highlighted in detection.}
    \vspace{8pt}
    \centering
    \resizebox{0.98\textwidth}{!}{
    \begin{threeparttable}
    \begin{tabular}{llcccc|cccc}
    \toprule[1.0pt]
    & &\multicolumn{4}{c}{Watermark Generation Quality} & \multicolumn{4}{c}{Watermark Detectability}\\
    \cmidrule(r){3-6} \cmidrule(r){7-10}
    Dataset & Method & \multicolumn{4}{c}{Num. Training Rows} &  \multicolumn{2}{c}{100} & \multicolumn{2}{c}{500} \\
    \cmidrule(r){3-6} \cmidrule(r){7-8} \cmidrule(r){9-10}
     & & \textbf{Marg.} & \textbf{Corr.} & \textbf{C2ST} & \textbf{MLE Gap} & \textbf{AUC} & \textbf{T@0.1\%F} & \textbf{AUC} & \textbf{T@0.1\%F} \\
    \midrule
    \multirow{7}{*}{Adult} & w/o WM & 0.994 & 0.984 & 0.996 & 0.017 & - & - & - & - \\ 
    & TR & 0.919 & 0.870 & 0.676 & \cellcolor{blue!10}0.046 & 0.590 & 0.004 & 0.774 & 0.171 \\
    & GS & 0.751 & 0.619 & 0.058 & 0.084 & 1.000 & 1.000 & 1.000 & 1.000 \\
    & TabWak & \cellcolor{blue!10}0.935 & \cellcolor{blue!10}0.885 & \cellcolor{blue!10}0.769 & 0.048 & 0.844 & 0.089 & 0.990 & 0.592 \\
    & TabWak* & 0.933 & 0.879 & 0.713 & 0.085 & 0.999 & 0.942 & 1.000 & 1.000 \\
    & \textbf{\modelname} & \cellcolor{blue!30}0.979 & \cellcolor{blue!30}0.963 & \cellcolor{blue!30}0.883 & \cellcolor{blue!30}0.017 & \cellcolor{blue!30}1.000 & \cellcolor{blue!30}1.000 & \cellcolor{blue!30}1.000 & \cellcolor{blue!30}1.000 \\
    \midrule
    \multirow{7}{*}{Default} & w/o WM & 0.990 & 0.934 & 0.979 & 0.000 & - & - & - & - \\ 
    & TR & 0.895 & 0.888 & 0.564 & 0.161 & 0.579 & 0.001 & 0.848 & 0.034 \\
    & GS & 0.701 & 0.678 & 0.059 & 0.182 & 1.000 & 1.000 & 1.000 & 1.000 \\
    & TabWak & \cellcolor{blue!10}0.911 & \cellcolor{blue!10}0.902 & \cellcolor{blue!10}0.568 & \cellcolor{blue!10}0.156 & 0.896 & 0.071 & 0.997 & 0.611 \\
    & TabWak* & 0.906 & 0.894 & 0.550 & 0.176 & 0.965 & 0.218 & 1.000 & 0.995 \\
    & \textbf{\modelname} & \cellcolor{blue!30}0.983 & \cellcolor{blue!30}0.925 & \cellcolor{blue!30}0.963 & \cellcolor{blue!30}0.002 & \cellcolor{blue!30}1.000 & \cellcolor{blue!30}1.000 & \cellcolor{blue!30}1.000 & \cellcolor{blue!30}1.000 \\
    \midrule
    \multirow{7}{*}{Magic} & w/o WM & 0.990 & 0.980 & 0.998 & 0.008 & - & - & - & - \\
    & TR & 0.898 & \cellcolor{blue!10}0.936 & \cellcolor{blue!10}0.621 & 0.129 & 0.652 & 0.014 & 0.592 & 0.102 \\
    & GS & 0.688 & 0.838 & 0.030 & \cellcolor{blue!10}0.064 & 1.000 & 1.000 & 1.000 & 1.000 \\
    & TabWak & \cellcolor{blue!10}0.905 & 0.929 & 0.605 & 0.120 & 0.904 & 0.067 & 0.997 & 0.737 \\
    & TabWak* & 0.891 & 0.916 & 0.520 & 0.100 & 0.873 & 0.050 & 0.995 & 0.687 \\
    & \textbf{\modelname} & \cellcolor{blue!30}0.991 & \cellcolor{blue!30}0.982 & \cellcolor{blue!30}0.999 & \cellcolor{blue!30}0.010 & \cellcolor{blue!30}1.000 & \cellcolor{blue!30}1.000 & \cellcolor{blue!30}1.000 & \cellcolor{blue!30}1.000 \\
    \midrule
    \multirow{7}{*}{Shoppers} & w/o WM & 0.985 & 0.974 & 0.974 & 0.017 & - & - & - & - \\
    & TR & 0.888 & 0.880 & 0.501 & \cellcolor{blue!10}0.077 & 0.575 & 0.001 & 0.830 & 0.058 \\
    & GS & 0.729 & 0.688 & 0.061 & 0.154 & 1.000 & 1.000 & 1.000 & 1.000 \\
    & TabWak & \cellcolor{blue!10}0.903 & \cellcolor{blue!10}0.886 & \cellcolor{blue!10}0.548 & 0.132 & 0.860 & 0.106 & 0.990 & 0.353 \\
    & TabWak* & 0.897 & 0.879 & 0.525 & 0.384 & 0.742 & 0.002 & 0.981 & 0.185 \\
    & \textbf{\modelname} & \cellcolor{blue!30}0.982 & \cellcolor{blue!30}0.974 & \cellcolor{blue!30}0.950 & \cellcolor{blue!30}0.015 & \cellcolor{blue!30}1.000 & \cellcolor{blue!30}1.000 & \cellcolor{blue!30}1.000 & \cellcolor{blue!30}1.000 \\
    \bottomrule[1.0pt]
    \end{tabular}
    \end{threeparttable}
    }
    \vspace{-0.3cm}
    \label{tbl:quality_detect}
\end{table}

In this section, we provide a comprehensive empirical evaluation of \modelname. 
We aim to answer the following research questions:
\textbf{Q1: Detectability v.s. Invisibility} (\S\ref{sec:main_results}): Can \modelname achieve strong detectability while preserving the distribution of the generated data?
\textbf{Q2: Robustness} (\S\ref{sec:main_results}): How resilient is the watermark to a range of post-processing attacks, such as row/column deletion or value perturbation?
\textbf{Q3: Component-wise Analysis} (\S\ref{sec:ablation_study}): How does \modelname perform under different design choices of its components?

\subsection{Setup}
\label{sec:setup}
\paragraph{Datasets.}
We select six real-world tabular datasets containing both numerical and categorical attributes: \texttt{Adult}, \texttt{Default}, \texttt{Shoppers}, \texttt{Magic}, \texttt{Beijing} and \texttt{News}. Due to space constraints, we defer the results on \texttt{News} and \texttt{Beijing} to Appendix \ref{appendix:news_beijing_results}.
The statistics of the datasets are summarized in Table \ref{tbl:exp-dataset} in Appendix \ref{appendix:dataset}.

\paragraph{Evaluation Metrics.}
(a) To evaluate the detectability of the watermark, we report the area under the curve (AUC)
of the receiver operating characteristic (ROC) curve, and the True Positive Rate when the False
Positive Rate is at 0.1\%, denoted as \textit{TPR@0.1\%FPR}.
(b) To evaluate the distortion of the watermarked data, we follow standard fedelity and utility metrics used in tabular data generation \citep{tabsyn, tabddpm}: we report Marginal distribution (Marg.), Pair-wise column correlation (Corr.), Classifier-Two-Sample-Test (C2ST), and Machine Learning Efficiency (MLE). For MLE, we report the gap between the downstream task performance of the generated data and the real test set (MLE Gap). 
We refer the readers to \citep{tabsyn} for a more detailed definition of each evaluation metric.

\paragraph{Baselines.}
Since our method belongs to the class of generative watermarking techniques, we primarily compare it with TabWak~\citep{tabwak}, which is the only existing generative watermarking approach for tabular data. We use the official implementations of both TabWak and its improved variant TabWak* in all experiments to ensure consistency. In addition, following the experimental setup in TabWak, we include two image watermarking methods, TreeRing (TR) and Gaussian Shading (GS), as auxiliary baselines. For completeness, we also evaluate against representative edit-based watermarking methods, including TabularMark~\citep{tabularmark} and WGTD~\citep{wgtd}. Due to space constraints, detailed descriptions and results for these methods are deferred to \Cref{appendix:edit-based-results}.

\paragraph{Implementation Details.}
We use the same tabular generative model as TabWak, namely TabSyn~\citep{tabsyn}, and train it using the official codebase. For TabWak, we use its official implementation to generate watermarked data. Notably, it bypasses the inversion of quantile normalization, which assumes access to ground-truth data not available for watermark detection, potentially giving it an advantage under our evaluation protocol.
Generation quality is evaluated across ten repetitions, and we report the averaged results.

\subsection{Main Results}
\label{sec:main_results}
\paragraph{Distortion and Detectability.}
We address the first question: whether the watermarking method achieves high watermark detectability while introducing minimal distortion to the generated data. As shown in \Cref{tbl:quality_detect} and \Cref{appendix:news_beijing_results}, \modelname consistently achieves strong performance across both fidelity and detection metrics on all six datasets. It yields the highest marginal statistics, correlation, C2ST and MLE, often closely matching the unwatermarked baseline. This suggests that the underlying data distribution is well preserved.
In terms of detectability, \modelname achieves nearly perfect detection performance across all datasets and detection budgets, as measured by both AUC and T@0.1\%F. For example, on the \texttt{Default} dataset, it attains a marginal statistic of 0.983 and AUC of 1.000, compared to 0.911 and 0.896 for TabWak. Notably, while GS also achieves strong detection scores, this comes at the cost of significantly higher distortion across all fidelity metrics. For instance, on the \texttt{Adult} dataset, GS results in a C2ST of only 0.058 and correlation of 0.619, in contrast to the higher values of 0.883 and 0.963 from \modelname, respectively. These results indicate that \modelname embeds a detectable watermark signal while preserving the statistical properties of the generated data to a greater extent than existing approaches.

\paragraph{Robustness against Attacks.}
We assess the robustness of watermarking methods under five representative attacks on tabular data: \textit{row shuffling}, \textit{row deletion}, \textit{column deletion}, \textit{cell deletion}, and \textit{value alteration}. Each attack is applied at varying intensities, with attack percentages ranging from 0.0 to 1.0 in increments of 0.2. For deletion-based attacks, a specified fraction of rows, columns, or cells is randomly removed and replaced with unwatermarked values independently sampled from the same generative model. In the value alteration attack, selected numerical values are perturbed by multiplying each with a scalar drawn uniformly from $(0.8, 1.2)$. Row shuffling permutes a fraction of the dataset’s rows.
We benchmark the detectability of \modelname against TabWak and TabWak* on the \texttt{Adult} dataset, using $N{=}500$ and $m{=}2$. As shown in \Cref{fig:robustness}, \modelname consistently outperforms or matches the performance of TabWak and TabWak* under four of the five attacks—row shuffling, row deletion, cell deletion, and value alteration. Under column deletion, however, \modelname shows a drop in performance due to its watermark being embedded via selected columns, which are partially removed by this attack. For benchmark results on other datasets, we refer readers to Appendix~\ref{appendix:robustness_results}.

\begin{figure}[!t]
    \centering
    \includegraphics[width=\textwidth]{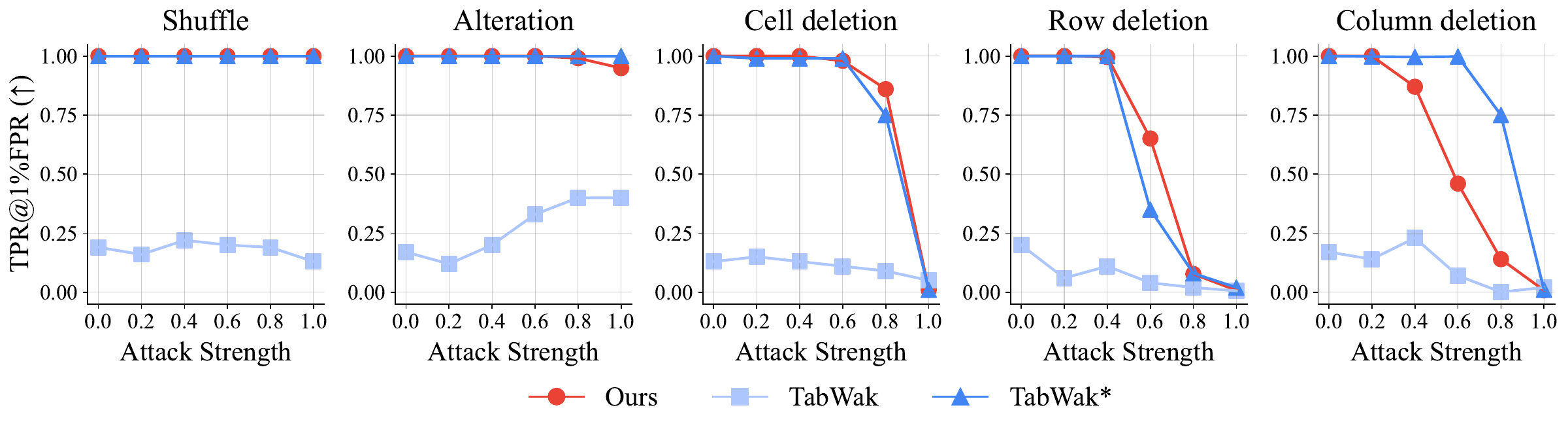}
    \vspace{-0.6cm}
    \caption{Detection performance of watermarking methods against different types of tabular data attacks across varying attack intensities.}
    \vspace{-0.6cm}
    \label{fig:robustness}
\end{figure}

\vspace{-0.3cm}
\begin{table}[htbp]
    \centering
    \small
    \caption{Component-wise ablation study of \modelname. Each color block indicates a different component of the method. Details of the experimental setup are in \S\ref{sec:ablation_study}.}
    \label{tbl:ablation}
    \resizebox{0.98\textwidth}{!}{%
    \begin{tabular}{clcccccccc}
    \toprule
    \textbf{Model} & \textbf{Score func.} & \textbf{Col. Select} & \textbf{Mask} & \textbf{Num. Col.} & $z$-\textbf{stat.$\uparrow$} & \textbf{Marg.$\uparrow$} & \textbf{Corr.$\uparrow$} & \textbf{C2ST$\uparrow$} & \textbf{MLE Gap$\downarrow$} \\
    \midrule
    \cellcolor{lightpink}TabSyn & Bernoulli & Adaptive & No & 3 & 7.348 & \textbf{0.979} & \textbf{0.963} & \textbf{0.883} & \textbf{0.017} \\
    \cellcolor{lightpink}TabDAR & Bernoulli & Adaptive & No & 3 & 7.270 & 0.977 & 0.958 & 0.880 & 0.018 \\
    \cellcolor{lightpink}DP-TBART & Bernoulli & Adaptive & No & 3 & \textbf{7.544} & 0.951 & 0.931 & 0.759 & 0.020 \\
    \midrule
    TabSyn & \cellcolor{lightgreen}Bernoulli & Adaptive & No & 3 & \textbf{7.348} & \textbf{0.979} & \textbf{0.963} & \textbf{0.883} & 0.017 \\
    TabSyn & \cellcolor{lightgreen}Uniform & Adaptive & No & 3 & 5.012 & 0.964 & 0.940 & 0.808 & \textbf{0.015} \\
    \midrule
    TabSyn & Bernoulli & \cellcolor{lightblue}Adaptive & No & 3 & \textbf{7.348} & \textbf{0.979} & \textbf{0.963} & \textbf{0.883} & 0.017 \\
    TabSyn & Bernoulli & \cellcolor{lightblue}Fixed & No & 3 & 5.439 & 0.949 & 0.907 & 0.601 & \textbf{0.015} \\
    \midrule
    TabSyn & Bernoulli & Adaptive & \cellcolor{lightyellow}No & 3 & \textbf{7.348} & 0.979 & 0.963 & 0.883 & \textbf{0.017} \\
    TabSyn & Bernoulli & Adaptive & \cellcolor{lightyellow}Yes & 3 & 4.819 & \textbf{0.985} & \textbf{0.973} & \textbf{0.940} & \textbf{0.017} \\
    \midrule 
    TabSyn & Bernoulli & Adaptive & No & \cellcolor{lightpurple}1 & 4.987 & 0.931 & 0.879 & 0.544 & \textbf{0.015} \\
    TabSyn & Bernoulli & Adaptive & No & \cellcolor{lightpurple}3 & 7.348 & 0.979 & 0.963 & 0.883 & 0.017 \\
    TabSyn & Bernoulli & Adaptive & No & \cellcolor{lightpurple}5 & 8.624 & 0.989 & 0.969 & 0.983 & 0.017 \\
    TabSyn & Bernoulli & Adaptive & No & \cellcolor{lightpurple}7 & \textbf{8.728} & \textbf{0.990} & \textbf{0.976} & \textbf{0.995} & 0.018 \\
    \bottomrule
    \end{tabular}
    } 
\end{table}

\vspace{-0.3cm}

\subsection{Ablation Study and Further Analysis}  
\label{sec:ablation_study}

We perform a component-wise ablation study to evaluate the contribution of each design choice in our watermarking framework.  All experiments are conducted on the \texttt{Adult} dataset and we generate watermarked table with $N=100$ rows, if not otherwise specified. For detectability, we report the $z$-statistic defined as $\frac{\sum_{i=1}^{N} s_k(\rvx_i) - N/2}{\sqrt{N/4}}$. 
\vspace{-0.3cm}
\paragraph{Score Function.}
We compare two scoring distributions: (1) a Bernoulli distribution with mean $0.5$, and (2) a uniform distribution over $[0,1]$. As shown in \Cref{tbl:ablation}, \modelname with the Bernoulli score yields higher detectability. This result aligns with our theoretical analysis in \Cref{lem:optimal_scoring_distribution}, which identifies Bernoulli$(0.5)$ as the optimal scoring distribution under our detection formulation.

\vspace{-0.2cm}
\paragraph{Column Selection.}
We compare adaptive column selection strategy with a fixed set strategy that selects the first three columns. As shown in \Cref{tbl:ablation}, adaptive column selection leads to higher detectability.
We also investigate the effect of varying the number of selected columns. Increasing the number of selected columns generally improves both detectability and generation quality due to improved diversity for the hash function. However, using more columns increases vulnerability to column deletion attacks. Additionally, detectability tends to saturate beyond three columns. Thus, we use three adaptively selected columns in all main experiments for a balanced trade-off.

\paragraph{Distortion-Free Watermarking.}
Ideally, the selection process described in \Cref{sec:methods} would introduce no distortion to the data distribution if it were entirely independent of sample values. In practice, however, some dependence is necessary to ensure watermark detectability. To approximate this ideal, we leverage the insight that if a value in the selected columns has not been previously used for watermarking, its selection can be considered effectively random.
To enforce this, we implement a masking mechanism that tracks previously watermarked values and skips watermarking on samples that would reuse them. As shown in \Cref{tbl:ablation}, this mechanism helps preserve the underlying data distribution and improves generation quality. However, it also reduces the number of watermarked samples, slightly compromising overall detection strength.

\paragraph{Model-Agnostic Applicability.}
While our primary experiments are based on a diffusion model~\citep{tabsyn}, \modelname is universally applicable to any generative model and pre-allocated tabular data. To demonstrate this, we evaluate \modelname on two additional representative paradigms of tabular generative modeling:
(1) \textit{Autoregressive models}: we adopt DP-TBART~\citep{tbart}, a transformer-based autoregressive model that predicts each tabular entry conditioned on preceding entries; and  
(2) \textit{Masked generative models}: we use TabDAR~\citep{tabdar}, a masked autoencoder model that predicts randomly masked values.
As shown in \Cref{tbl:ablation}, \modelname consistently achieves strong detectability and generation quality across all three model families, confirming its generality and robustness across diverse generative architectures.


\paragraph{Computation Time.}
\begin{wrapfigure}{r}{0.48\textwidth}
    \centering
    \vspace{-1.2em}
    \includegraphics[width=\linewidth]{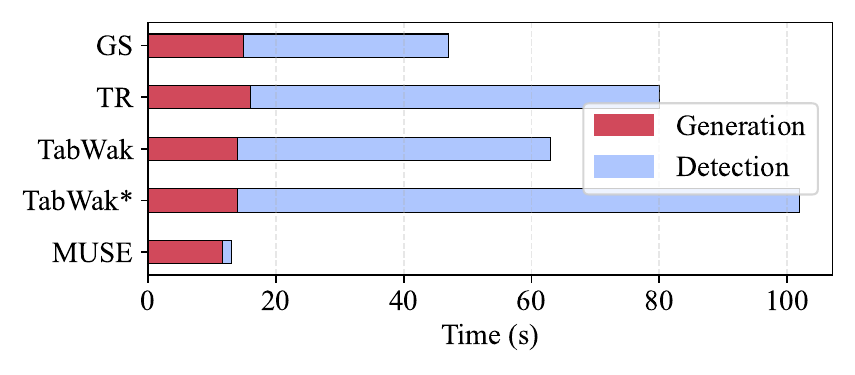}
    \vspace{-2.0em}
    \caption{Watermark generation and detection time of \modelname and inversion-based baselines.}
    \label{fig:computation}
    \vspace{-1em}
\end{wrapfigure}

We compare the effective watermarking time (generation + detection) of \modelname with baselines that rely on DDIM inversion. We generate 10K watermarked rows of the \texttt{Adult} dataset. As shown in \Cref{fig:computation}, \modelname achieves significantly lower detection time by avoiding the costly inversion process. Notably, its generation time is also lower than that of the baselines, despite using multi-sample generation ($m=2$). This efficiency arises from \modelname's compatibility with fast score-based diffusion models~\citep{tabsyn,edm}, which require only 50 sampling steps, compared to the 1,000 steps typically needed for DDIM inversion to ensure sufficient discretization.

\section{Related Work}
\label{sec:related_work}
\vspace{-0.1cm}

\paragraph{Generative Watermarking.}
Generative watermarking embeds watermark signals during the generation process, typically by manipulating the generation randomness through pseudorandom seeds. This approach has proven effective and efficient for watermarking in image, video, and large language model (LLM) generation.
In image and video generation, where diffusion-based models are the \textit{de facto} standard, watermarking methods inject structured signals into the noise vector in latent space~\citep{tree_ring_watermark,gaussian_shading,huang2024robin}. Detection involves inverting the diffusion sampling process~\citep{dhariwal2021diffusion,hong2024exact,pan2023effective} to recover the original noise vector and verify the presence of the embedded watermark.
For LLMs, generative watermarking methods fall into two categories: 
(1) \textit{Watermarking during logits generation}, which embeds signals by manipulating the model's output logits distribution~\citep{kgw,unigram,hu2023unbiased,synthid,watermax,Liu2023ASI}; and 
(2) \textit{Watermarking during token sampling}, which preserves the logits distribution but replaces the stochastic token sampling process (e.g., multinomial sampling) with a pseudorandom procedure seeded for watermarking~\citep{aaronson2022watermarking,kuditipudi2023robust,christ2024undetectable}. In this sense, sampling-based watermarking is conceptually similar to inversion-based watermarking used in diffusion models. We refer the reader to \citep{liu2024survey,markllm} for a comprehensive survey of watermarking for LLMs.
Closest to our approach are SynthID~\citep{synthid} and Watermax~\citep{watermax}, both of which embed watermarks via repeated logit generation. However, our approach is specifically designed for unconditional tabular data generation, unlike these methods which primarily target discrete text. This focus on tabular data introduces unique challenges due to its distinct data structure. Consequently, our watermarking technique is engineered for robustness against a different set of attacks prevalent in the tabular domain.

\paragraph{Watermarking for Tabular Data}
Traditional tabular watermarking techniques are edit-based, injecting signals by modifying existing data values. WGTD~\citep{wgtd} embeds watermarks by altering the fractional parts of continuous values using a green list of intervals, but it is inapplicable to categorical-only data. TabularMark~\citep{tabularmark} perturbs values in a selected numerical column using pseudorandom domain partitioning, but relies on access to the original table for detection, limiting its robustness in adversarial settings.
Another significant drawback of such methods is the potential to distort the original data distribution or violate inherent constraints.
To overcome this, TabWak~\citep{tabwak} introduced the first generative watermarking approach for tabular data. Analogous to inversion-based watermarks in diffusion models, TabWak embeds detectable patterns into the noise vector within the latent space. It also employs a self-clone and shuffling technique to minimize distortion to the data distribution.
While TabWak avoids post-hoc editing, its reliance on inverting both the sampling process (e.g., DDIM~\citep{song2020denoising}) and preprocessing steps (e.g., quantile normalization~\citep{wikipedia_quantile_normalization}) can introduce reconstruction errors. These errors will in turn impair the watermark's detectability.

\vspace{-0.2cm}
\section{Conclusion}
\label{sec:conclusion}

We propose \modelname, a model-agnostic watermarking method that embeds signals via multi-sample selection, eliminating the need for costly and error-prone inversion procedures. \modelname achieves strong detectability while introducing negligible distributional distortion and seamlessly scales across a wide range of generative models. Extensive experiments on benchmark datasets validate its effectiveness, consistently outperforming prior edit-based and inversion-based approaches in both generation fidelity and watermark robustness.
As synthetic tabular data becomes increasingly adopted in high-stakes domains such as healthcare, finance, and social science, ensuring data traceability and integrity is critical. \modelname provides a practical and generalizable solution for watermarking synthetic data, enabling reliable provenance tracking, ownership verification, and misuse detection. We believe this work opens new avenues for trustworthy synthetic data generation and highlights the importance of integrating security considerations into the core of data-centric AI systems.


\clearpage
\newpage
\bibliographystyle{plainnat}
\bibliography{ref}

\newpage
\appendix
\addcontentsline{toc}{section}{Appendix}
\section*{\LARGE Appendix}
\startcontents[part]
\printcontents[part]{}{1}{}
\newpage
\section{Additional Experiments Results}

\subsection{Omitted Results on Distortion and Detectability}
\label{appendix:news_beijing_results}
We present the omitted results on distortion and detectability in Table~\ref{tbl:quality_detect_appendix}.
\begin{table}[H]
    \vspace{-0.5cm}
    \caption{Watermark generation quality and detectability, \raisebox{0.5ex}{\colorbox{blue!30}{\quad}} indicates best performance, \raisebox{0.5ex}{\colorbox{blue!10}{\quad}} indicates second-best performance. For clarity, only our method is highlighted in detection.}
    \vspace{3pt}
    \centering
    \resizebox{0.98\textwidth}{!}{
    \begin{threeparttable}
    \begin{tabular}{llcccc|cccc}
    \toprule[1.0pt]
    & &\multicolumn{4}{c}{Watermark Generation Quality} & \multicolumn{4}{c}{Watermark Detectability}\\
    \cmidrule(r){3-6} \cmidrule(r){7-10}
    Dataset & Method & \multicolumn{4}{c}{Num. Training Rows} &  \multicolumn{2}{c}{100} & \multicolumn{2}{c}{500} \\
    \cmidrule(r){3-6} \cmidrule(r){7-8} \cmidrule(r){9-10}
     & & \textbf{Marg.} & \textbf{Corr.} & \textbf{C2ST} & \textbf{MLE Gap} & \textbf{AUC} & \textbf{T@0.1\%F} & \textbf{AUC} & \textbf{T@0.1\%F} \\
    \midrule
    \midrule
    \multirow{7}{*}{Beijing} & w/o WM & 0.977 & 0.958 & 0.934 & 0.199 & - & - & - & - \\ 
    & TR & 0.914 & \cellcolor{blue!10}0.873 & 0.734 & 0.396 & 0.577 & 0.000 & 0.548 & 0.007 \\
    & GS & 0.656 & 0.529 & 0.097 & 0.715 & 1.000 & 1.000 & 1.000 & 1.000 \\
    & TabWak & \cellcolor{blue!10}0.923 & 0.871 & \cellcolor{blue!10}0.792 & \cellcolor{blue!10}0.375 & 0.925 & 0.096 & 0.999 & 0.978 \\
    & TabWak* & 0.917 & 0.860 & 0.761 & 0.403 & 0.996 & 0.734 & 1.000 & 1.000 \\
    & \textbf{\modelname} & \cellcolor{blue!30}0.972 & \cellcolor{blue!30}0.955 & \cellcolor{blue!30}0.926 & \cellcolor{blue!30}0.209 & \cellcolor{blue!30}1.000 & \cellcolor{blue!30}1.000 & \cellcolor{blue!30}1.000 & \cellcolor{blue!30}1.000 \\
    \midrule
    \multirow{7}{*}{News} & w/o WM & 0.960 & 0.973 & 0.899 & 0.024 & - & - & - & - \\ 
    & TR & 0.899 & 0.963 & 0.641 & \cellcolor{blue!10}0.041 & 0.547 & 0.000 & 0.549 & 0.005 \\
    & GS & 0.673 & 0.907 & 0.031 & 0.065 & 1.000 & 1.000 & 1.000 & 1.000 \\
    & TabWak & \cellcolor{blue!10}0.929 & \cellcolor{blue!10}0.968 & \cellcolor{blue!10}0.749 & 0.066 & 0.998 & 0.869 & 1.000 & 1.000 \\
    & TabWak* & 0.924 & 0.964 & 0.719 & 0.044 & 1.000 & 0.991 & 1.000 & 1.000 \\
    & \textbf{\modelname} & \cellcolor{blue!30}0.959 & \cellcolor{blue!30}0.973 & \cellcolor{blue!30}0.883 & \cellcolor{blue!30}0.033 & \cellcolor{blue!30}1.000 & \cellcolor{blue!30}1.000 & \cellcolor{blue!30}1.000 & \cellcolor{blue!30}1.000 \\
    \bottomrule[1.0pt]
    \end{tabular}
    \end{threeparttable}
    }
    \vspace{-0.3cm}
    \label{tbl:quality_detect_appendix}
\end{table}

\subsection{Omitted Results on Robustness}
\label{appendix:robustness_results}
We present the omitted robustness results in \Cref{fig:all_comparisons}, where \modelname is compared against TabWak and TabWak* on the \texttt{Beijing, Default, Magic, News}, and \texttt{Shoppers} datasets. Overall, \modelname demonstrates stronger robustness under cell deletion and row deletion attacks, while achieving comparable performance on alteration and column deletion attacks. Both \modelname and TabWak/TabWak* remain resilient to shuffle attacks, due to embedding watermarks at the individual row level. Notably, we observe that TabWak and TabWak* exhibit instability on certain datasets, such as \texttt{Shoppers} and \texttt{Beijing}, where detection performance fluctuates—first decreasing and then increasing—as attack intensity increases. We hypothesize that this behavior stems from the inherent instability of the VAE inversion process.

\begin{figure}[t]
  \centering
  \includegraphics[width=\linewidth]{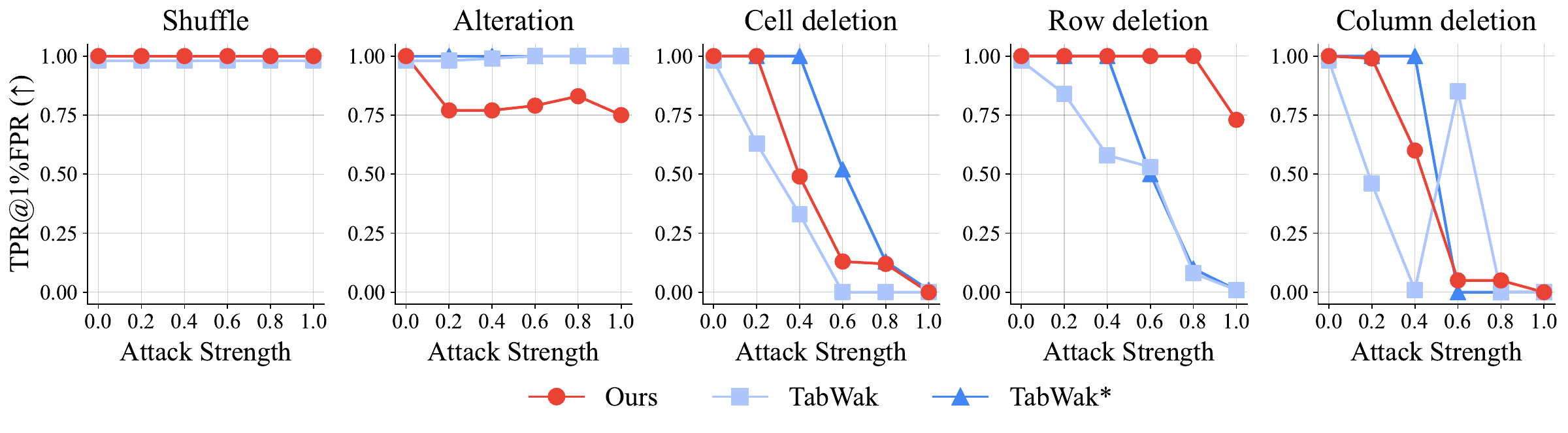}
  \vspace{0.5em}
  \includegraphics[width=\linewidth]{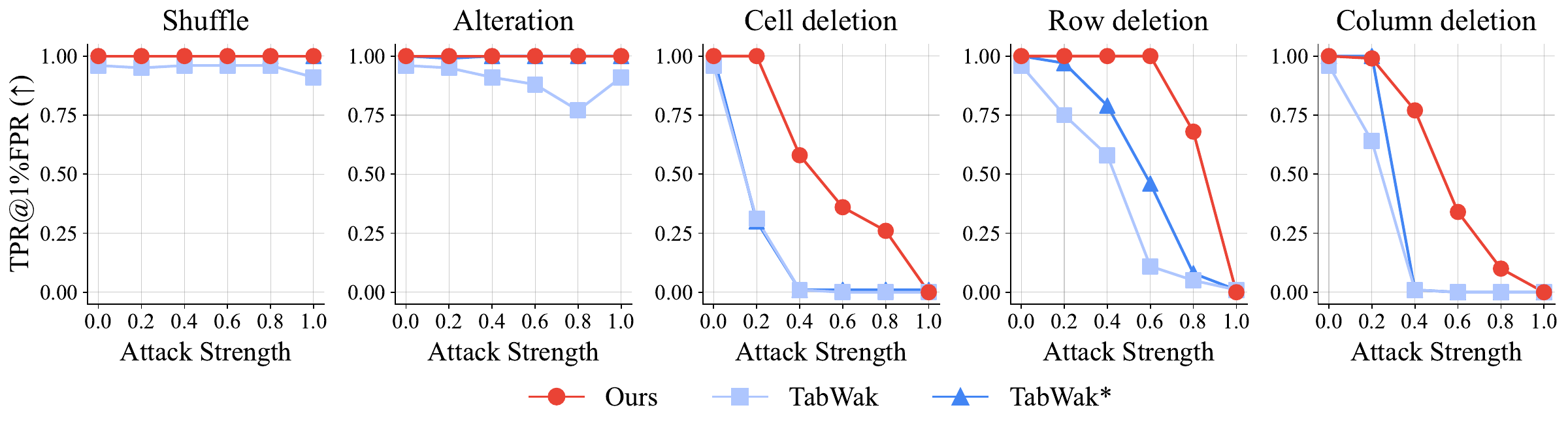}
  \vspace{0.5em}
  \includegraphics[width=\linewidth]{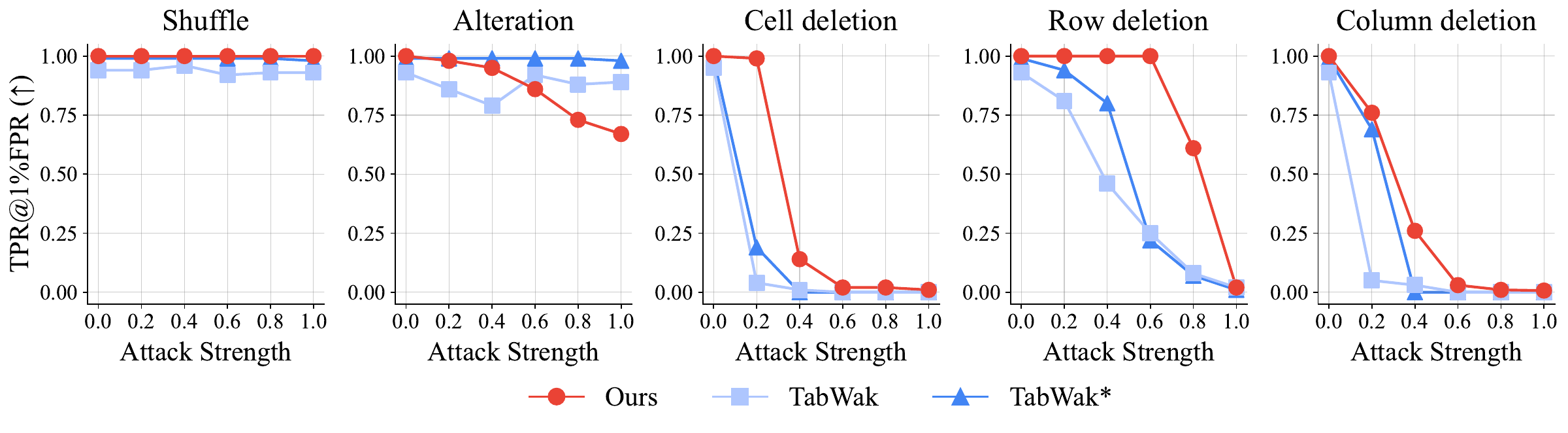}
  \vspace{0.5em}
  \includegraphics[width=\linewidth]{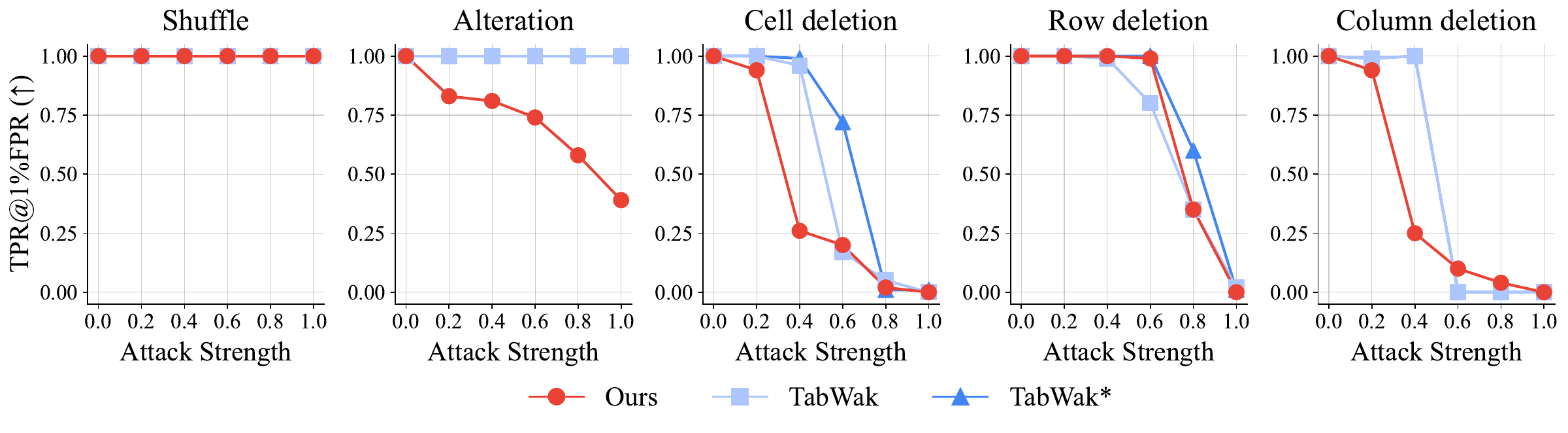}
  \vspace{0.5em}
  \includegraphics[width=\linewidth]{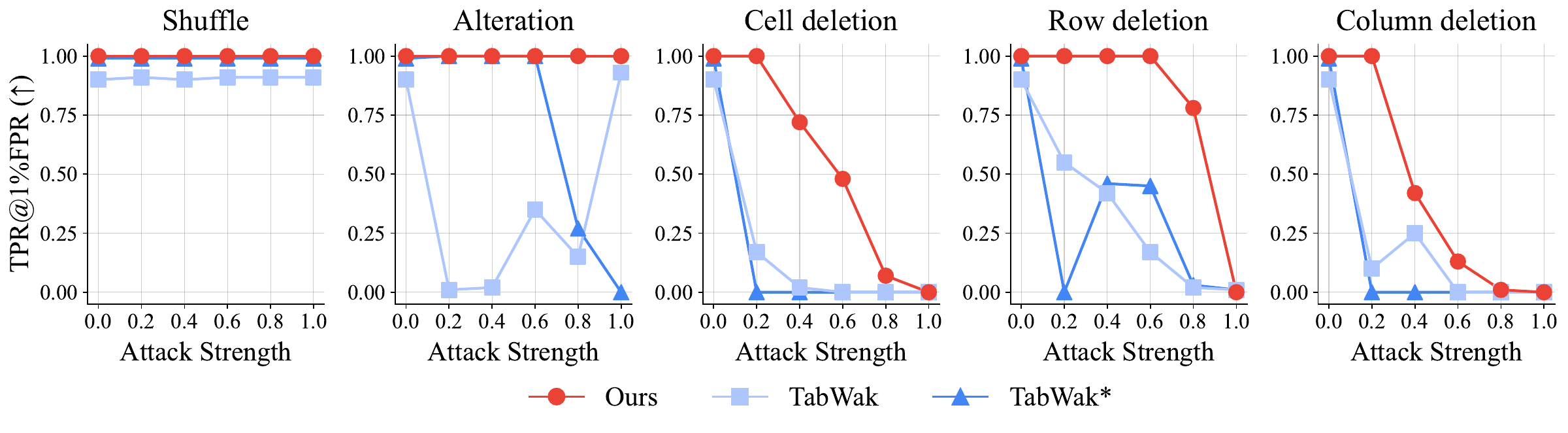}
  \caption{Detection performance of \modelname vs. TabWak/TabWak* against different types of tabular data attacks across varying attack intensities. From top to bottom: \texttt{Beijing, Default, Magic, News and Shoppers.}}
  \label{fig:all_comparisons}
\end{figure}

\subsection{Omitted Results on Edit-based Watermarking}
\label{appendix:edit-based-results}
\begin{table}[t]
    \caption{Watermark generation quality and detectability, \raisebox{0.5ex}{\colorbox{blue!30}{\quad}} indicates best performance, \raisebox{0.5ex}{\colorbox{blue!10}{\quad}} indicates second-best performance. For clarity, only our method is highlighted in detection.}
    \vspace{8pt}
    \centering
    \resizebox{0.98\textwidth}{!}{
    \begin{threeparttable}
    \begin{tabular}{llcccc|cccc}
    \toprule[1.0pt]
    & &\multicolumn{4}{c}{Watermark Generation Quality} & \multicolumn{4}{c}{Watermark Detectability}\\
    \cmidrule(r){3-6} \cmidrule(r){7-10}
    Dataset & Method & \multicolumn{4}{c}{Num. Training Rows} &  \multicolumn{2}{c}{100} & \multicolumn{2}{c}{500} \\
    \cmidrule(r){3-6} \cmidrule(r){7-8} \cmidrule(r){9-10}
     & & \textbf{Marg.} & \textbf{Corr.} & \textbf{C2ST} & \textbf{MLE Gap} & \textbf{AUC} & \textbf{T@0.1\%F} & \textbf{AUC} & \textbf{T@0.1\%F} \\
    \midrule
    \multirow{5}{*}{Adult} & w/o WM & 0.994 & 0.984 & 0.996 & 0.017 & - & - & - & - \\ 
    & TabularMark & \cellcolor{blue!10}0.983 & 0.949 & \cellcolor{blue!30}0.987 & 0.021 & 1.000 & 1.000 & 1.000 & 1.000 \\
    & WGTD & \cellcolor{blue!30}0.987 & \cellcolor{blue!30}0.972 & \cellcolor{blue!10}0.978 & \cellcolor{blue!10}0.019 & 1.000 & 1.000 & 1.000 & 1.000 \\
    & \textbf{\modelname} & 0.979 & \cellcolor{blue!10}0.963 & 0.883 & \cellcolor{blue!30}0.017 & \cellcolor{blue!30}1.000 & \cellcolor{blue!30}1.000 & \cellcolor{blue!30}1.000 & \cellcolor{blue!30}1.000 \\
    \midrule
    \multirow{5}{*}{Beijing} & w/o WM & 0.977 & 0.958 & 0.934 & 0.199 & - & - & - & - \\ 
    & TabularMark & 0.935 & 0.789 & \cellcolor{blue!30}0.941 & 0.528 & 1.000 & 1.000 & 1.000 & 1.000 \\
    & WGTD & \cellcolor{blue!10}0.964 & \cellcolor{blue!10}0.948 & \cellcolor{blue!10}0.929 & \cellcolor{blue!10}0.527 & 1.000 & 1.000 & 1.000 & 1.000 \\
    & \textbf{\modelname} & \cellcolor{blue!30}0.972 & \cellcolor{blue!30}0.955 & 0.926 & \cellcolor{blue!30}0.209 & \cellcolor{blue!30}1.000 & \cellcolor{blue!30}1.000 & \cellcolor{blue!30}1.000 & \cellcolor{blue!30}1.000 \\
    \midrule
    \multirow{5}{*}{Default} & w/o WM & 0.990 & 0.934 & 0.979 & 0.000 & - & - & - & - \\ 
    & TabularMark & \cellcolor{blue!10}0.987 & 0.939 & \cellcolor{blue!10}0.961 & 0.004 & 1.000 & 1.000 & 1.000 & 1.000 \\
    & WGTD & \cellcolor{blue!30}0.989 & \cellcolor{blue!30}0.913 & \cellcolor{blue!30}0.919 & \cellcolor{blue!30}0.000 & 1.000 & 1.000 & 1.000 & 1.000 \\
    & \textbf{\modelname} & 0.983 & \cellcolor{blue!10}0.925 & 0.963 & \cellcolor{blue!10}0.002 & \cellcolor{blue!30}1.000 & \cellcolor{blue!30}1.000 & \cellcolor{blue!30}1.000 & \cellcolor{blue!30}1.000 \\
    \midrule
    \multirow{5}{*}{Magic} & w/o WM & 0.990 & 0.980 & 0.998 & 0.008 & - & - & - & - \\
    & TabularMark & \cellcolor{blue!10}0.985 & 0.975 & \cellcolor{blue!30}0.999 & 0.026 & 1.000 & 1.000 & 1.000 & 1.000 \\
    & WGTD & 0.979 & \cellcolor{blue!10}0.977 & 0.998 & \cellcolor{blue!10}0.019 & 1.000 & 1.000 & 1.000 & 1.000 \\
    & \textbf{\modelname} & \cellcolor{blue!30}0.991 & \cellcolor{blue!30}0.982 & \cellcolor{blue!30}0.999 & \cellcolor{blue!30}0.010 & \cellcolor{blue!30}1.000 & \cellcolor{blue!30}1.000 & \cellcolor{blue!30}1.000 & \cellcolor{blue!30}1.000 \\
    \midrule    
    \multirow{4}{*}{News} & w/o WM & 0.960 & 0.973 & 0.811 & 0.024 & - & - & - & - \\ 
    & TabularMark & \cellcolor{blue!30}0.959 & \cellcolor{blue!10}0.969 & \cellcolor{blue!10}0.877 & \cellcolor{blue!10}0.130 & 1.000 & 1.000 & 1.000 & 1.000 \\
    & WGTD & 0.903 & 0.968 & 0.861 & 0.131 & 1.000 & 1.000 & 1.000 & 1.000 \\
    & \textbf{\modelname} & \cellcolor{blue!30}0.959 & \cellcolor{blue!30}0.973 & \cellcolor{blue!30}0.883 & \cellcolor{blue!30}0.033 & \cellcolor{blue!30}1.000 & \cellcolor{blue!30}1.000 & \cellcolor{blue!30}1.000 & \cellcolor{blue!30}1.000 \\
    \midrule
    \multirow{4}{*}{Shoppers} & w/o WM & 0.985 & 0.974 & 0.974 & 0.017 & - & - & - & - \\
    & TabularMark & \cellcolor{blue!10}0.974 & 0.930 & \cellcolor{blue!30}0.975 & \cellcolor{blue!30}0.013 & 1.000 & 1.000 & 1.000 & 1.000 \\
    & WGTD & 0.964 & \cellcolor{blue!10}0.944 & 0.887 & 0.008 & 1.000 & 1.000 & 1.000 & 1.000 \\
    & \textbf{\modelname} & \cellcolor{blue!30}0.982 & \cellcolor{blue!30}0.974 & \cellcolor{blue!10}0.950 & \cellcolor{blue!10}0.015 & \cellcolor{blue!30}1.000 & \cellcolor{blue!30}1.000 & \cellcolor{blue!30}1.000 & \cellcolor{blue!30}1.000 \\
    \bottomrule[1.0pt]
    \end{tabular}
    \end{threeparttable}
    }
    \label{tbl:edit_baseline}
\end{table}

We compare our method against two representative \textbf{edit-based} watermarking baselines, which embed watermarks by directly altering table entries. Since the official implementations of these methods are not publicly available, we reimplement them based on the descriptions in their original papers. We first outline their core methodologies and our reimplementation details, then present the comparative results in Table~\ref{tbl:edit_baseline}. \textbf{Our reproduced codes are provided in the supplementary material.} Below are the
detailed implementations of the baselines.

\label{appendix:implementation}
\noindent\textbf{WGTD}~\citep{wgtd}.  
WGTD embeds watermarks by modifying the fractional part of continuous data points, replacing them with values from a predefined green list. Consequently, \textbf{it is limited to continuous data and cannot be applied to tables containing only categorical features}.

The watermarking process in WGTD involves three main steps: (i) dividing the interval $[0, 1]$ into $2m$ equal sub-intervals to form $m$ pairs of consecutive intervals; (ii) randomly selecting one interval from each pair to construct a set of $m$ ``green list'' intervals; and (iii) replacing the fractional part of each data point with a value sampled from the nearest green list interval, if the original does not already fall within one. Detection is performed via a hypothesis-testing framework that exploits the statistical properties of the modified distribution to reliably identify the presence of a watermark. For reproducibility, we adopt the original hyperparameter setting with $m = 5$ green list intervals.

\noindent\textbf{TabularMark}~\citep{tabularmark}.  
TabularMark embeds watermarks by perturbing specific cells in the data. It first pick a selected attribute/column to embed the watermark, then it generate pesudorandom partition of a fixed range into multiple unit domains, and label them with red and green domains, and finally perturb the selected column with a random number from the green domain.
In our implementation, we choose the first numerical column as the selected attribute, and set the number of unit domains $k=500$, the perturbation range controlled by $p=25$, and configure $n_w$ as 10\% of the total number of rows. 

During detection, TabularMark leverages the original unwatermarked table to reverse the perturbations and verify whether the restored differences fall within the green domain. However, \textbf{this approach assumes access to the original unwatermarked table}, which is often impractical, especially in scenarios where the watermarked table can be modified by adversaries.

\paragraph{Discussions.} 
As demonstrated in \Cref{tbl:edit_baseline}, both WGTD and TabularMark exhibit strong detection performance across all datasets. Furthermore, their generation quality is generally comparable to that of \modelname. However, a notable observation is the significant performance degradation measured by the MLE metric for both WGTD and TabularMark on the \texttt{Beijing} and \texttt{News} datasets. We hypothesize that this performance drop stems from the post-editing process, which may introduce substantial artifacts into the data. These artifacts, in turn, could negatively impact the performance of downstream machine learning tasks.

\section{Further Analysis of the Inversion-Based Watermarking}
\label{appendix:inversion-based-watermarking}
We first introduce the overall pipeline of inversion-based watermarking in \Cref{fig:tabsyn_architecture}. The difficulty lies in the inversion of three components, in sequential order: (1) inverse Quantile Transformation (IQT) \S\ref{appendix:iq_inversion}, (2) the VAE decoder \S\ref{appendix:vae_inversion}, and (3) the DDIM sampling process \S\ref{appendix:ddim_inversion}. Finally, we analyze the error accumulation and detection performance across the inversion stages in \S\ref{appendix:error_accumulation}.

\subsection{Pipeline of Inversion-based Watermarking}
\begin{figure}[H]
  \centering
  \includegraphics[width=\linewidth]{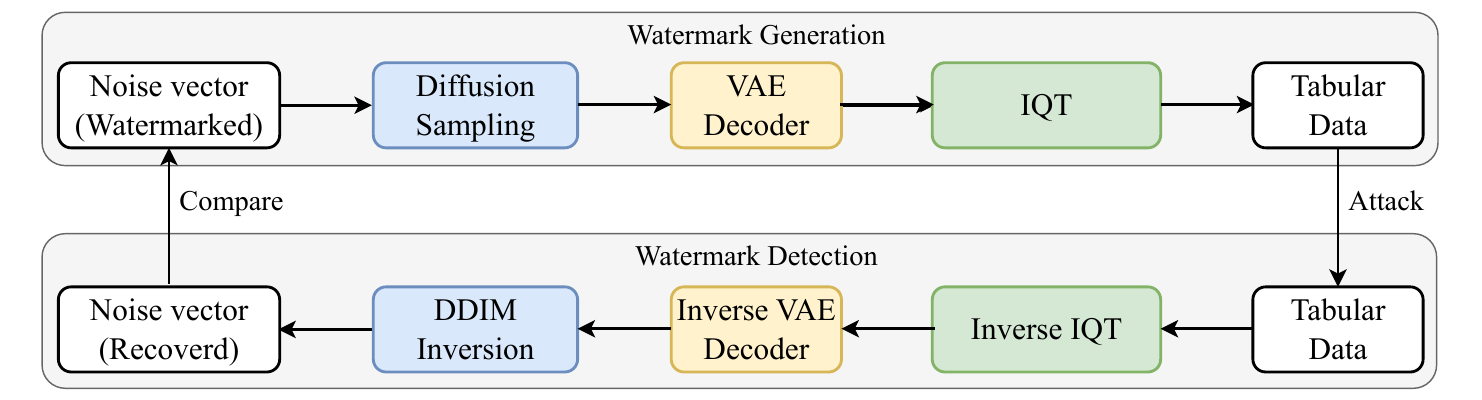}
  \caption{Pipeline of Inversion-based Watermarking. \textbf{Top}: The watermark signal is embedded in the noise vector in the latent space, a watermarked table is subsequently generated. \textbf{Bottom}: To detect the watermark signal, we need to reverse the entire pipeline. IQT stands for the inverse map of Quantile Transformation.}
  \label{fig:tabsyn_architecture}
\end{figure}

\subsection{Inversion of (Inverse) Quantile Transformation}
\label{appendix:iq_inversion}
The Quantile Transformation~\citep{wikipedia_quantile_normalization} is a widely used~\citep{tabsyn,tabdar,tabdiff,tabddpm} data preprocessing step in tabular data synthesis. It regularizes the data distribution to a standard normal distribution. The Quantile Transformation can be implemented as follows:
\begin{enumerate}
    \item Estimate the empirical cumulative distribution function (CDF) of the features.
    \item Map to uniform distribution with the estimated CDF.
    \item Map to standard normal distribution with inverse transform sampling: $z = \Phi^{-1}(u)$, where $\Phi$ is the CDF of the standard normal distribution.
\end{enumerate}
Note that in the second step, only the ordering of the data is preserved, and the exact values are not preserved, making the map non-injective, therefore, the inverse of the Quantile Transformation is inherently error-prone. 
Based on the official codebase, TabWak~\citep{tabwak} bypass the inversion of quantile normalization by caching the original data during watermarking, this is infeasible in practical scenarios where the ground truth is unavailable. To study the impact of the inversion error of the Quantile Transformation, we apply the original Quantile Transformation to the sampled tabular data to inverse the inverse quantile transformation.

\subsection{Inversion of VAE decoder}
\label{appendix:vae_inversion}
Denote the VAE decoder as $f_{\theta}$, and the VAE decoder output as $\rvx = f_{\theta}(\rvz)$. To get $\rvz$ from $\rvx$, \citep{tabwak} employ a gradient-based optimization to approximate the inverse of the VAE decoder. Specifically, we can parametrize the unknown $\rvz$ with trainable parameters, and optimize the following objective with standard gradient descent:
\begin{equation} \nonumber
  \label{eq:vae_decoder_inverse}
  \rvz = \arg\min_{\rvz} \left\| \rvx - f_{\theta}(\rvz) \right\|_2^2.
\end{equation}
where $\rvz$ is inilitaized as $g(f_{\theta}(\rvx))$, and $g(\cdot)$ is a VAE encoder. However, there is no guarantee that the above optimization will converge to the true $\rvz$, and we observed that the optimization process is unstable (sometimes produce $\text{NaN}$) for tabular data and introduce significant error in the inversion process.

\subsection{DDIM Inversion} 
\label{appendix:ddim_inversion}
The DDIM diffusion forward process is defined as:
\begin{equation} \nonumber
q(\rvx_t \mid \rvx_{t-1}) = \mathcal{N}(\rvx_t; \sqrt{1-\beta_t} \rvx_{t-1}, \beta_t \mathbf{I}),
\end{equation}
where $\rvx_0$ is the original data, $\rvx_t$ is the data at time $t$, and $\beta_t$ is the variance of the noise at step $t$.
Based on the above definition, we can write $\rvx_t$ as:
\begin{equation} \tag{Forward process}
\rvx_t = \sqrt{\bar{\alpha}_t} \rvx_{t-1} + \sqrt{1-\bar{\alpha}_t} \epsilon,
\end{equation}
where $\bar{\alpha}_t = \prod_{i=0}^t (1-\beta_i)$, $\epsilon \sim \mathcal{N}(\mathbf{0}, \mathbf{I})$.

Starting from $\rvx_T$, we sample $\rvx_{T-1}, \ldots, \rvx_0$ recursively according to the following process:
\begin{equation} \tag{Reverse process}
\label{eq:reverse_process}
\begin{aligned}
  &\rvx_0^t = \left(\rvx_t-\sqrt{1-\bar{\alpha}_t} \epsilon_\theta(\rvx_t, t)\right) / \sqrt{\bar{\alpha}_t} \\
  &\rvx_{t-1} = \sqrt{\bar{\alpha}_{t-1}} \rvx_0^t+\sqrt{1-\bar{\alpha}_{t-1}} \epsilon_\theta(\rvx_t, t),
\end{aligned}
\end{equation}
where $\epsilon_\theta(\rvx_t, t)$ is noise predicted by a neural network.

The \textbf{DDIM inversion process} is defined as the inverse of the DDIM reverse process. Specifically, starting from $\rvx_0$, our goal is to recover the original noise vector $\rvx_T$ in the latent space.
We introduce the basic DDIM inversion process proposed in \citep{dhariwal2021diffusion}, and is widely adapted in inversion-based watermark methods~\citep{tree_ring_watermark,gaussian_shading,tabwak,hu2025videoshield}.

We can abtain the inverse of the DDIM forward process by replacing the $t-1$ subscript with $t+1$ in \Cref{eq:reverse_process}, but use $\rvx_t$ to approxiate the unknown $\rvx_{t+2}$:
\begin{equation} \nonumber
  \label{eq:reverse_process_inverse}
  \begin{aligned}
    &\rvx_{t+1} = \sqrt{\bar{\alpha}_{t+1}} \rvx_0^t+\sqrt{1-\bar{\alpha}_{t+1}} \epsilon_\theta(\rvx_{t}, t),
  \end{aligned}
\end{equation}
Due to the approximation $\rvx_t \approx \rvx_{t+2}$, the inversion process generally demands a finer discretization of the time steps. For instance, inversion-based watermarking methods~\citep{tree_ring_watermark,tabwak} typically adopt $T=1000$ steps, whereas diffusion models optimized for fast inference~\citep{edm,tabsyn} often operate with a coarser discretization of $T=50$ steps. 
\paragraph{Advanced Inversion Methods.} To address the inexactness of the above inversion process, recent works \citep{hong2024exact,pan2023effective} have proposed more accurate inversion methods based on iterative optimization. However, we empirically found that those methods still suffer from inversion error due to already noisy input from the previous steps (VAE decoder and Quantile Transformation).

\begin{figure}[H]
    \centering
    \includegraphics[width=0.8\textwidth]{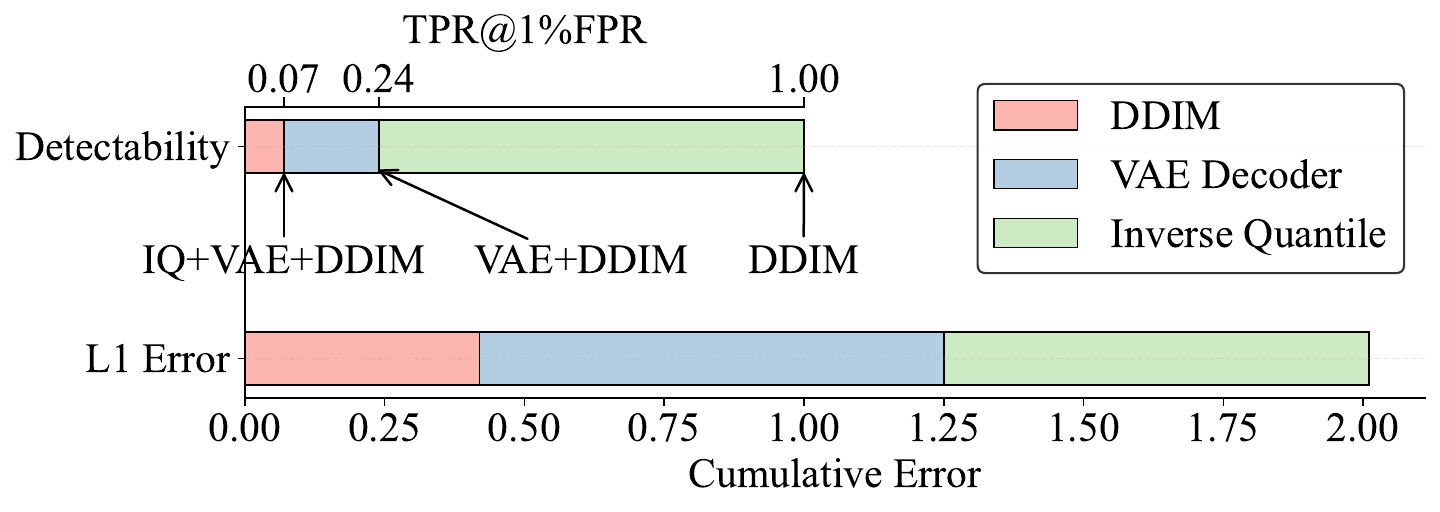}
    \caption{Error Accumulation and Detection Performance Across Inversion Stages of TabWak. The $\ell_1$ error is computed between the estimated and ground truth noise vectors in latent space.}
    \label{fig:error_vs_perf}
\end{figure}

\subsection{Error Accumulation}
\label{appendix:error_accumulation}
In \Cref{fig:error_vs_perf}, we analyze the error accumulated at each inversion stage and its impact on detection performance using the \texttt{Adult} dataset. Specifically, we compute the TPR@1\%FPR over 100 watermarked tables, each with 100 rows. The top bar chart shows detection performance when progressively inverting different parts of the pipeline. From left to right:
\begin{itemize}
    \item When we invert the entire pipeline (IQ $\rightarrow$ VAE $\rightarrow$ DDIM), the detection performance drops to 0.07 TPR@1\%FPR.
    \item When we provide the ground-truth IQ and only invert the VAE decoder and DDIM, the performance improves to 0.24 TPR@1\%FPR.
    \item When both the ground-truth IQ and VAE decoder outputs are provided (i.e., only DDIM is inverted), detection reaches a perfect 1.0 TPR@1\%FPR.
\end{itemize}

The bottom bar chart reports the $\ell_1$ error between the estimated and ground-truth noise vectors in the latent space. From left to right, the bars correspond to:
\begin{itemize}
    \item Inverting only DDIM (given the ground-truth VAE output),
    \item Inverting both the VAE decoder and DDIM (given the ground-truth IQ), and
    \item Inverting the full pipeline (IQ $\rightarrow$ VAE $\rightarrow$ DDIM).
\end{itemize}

This comparison highlights how errors accumulate through the inversion stages and directly affect watermark detectability.

\section{Experimental Details}
\subsection{Hardware Specification}
We use a single hardware for all experiments. The hardware specifications are as follows:
\begin{itemize}
    \item GPU: NVIDIA RTX 4090
    \item CPU: Intel 14900K
\end{itemize}

\subsection{Dataset Statistics}\label{appendix:dataset}
The dataset used in this paper could be automatically downloaded using the script in the provided code.
We use $6$ tabular datasets from UCI Machine Learning Repository\footnote{\url{https://archive.ics.uci.edu/datasets}}: Adult\footnote{\url{https://archive.ics.uci.edu/dataset/2/adult}}, Default\footnote{\url{https://archive.ics.uci.edu/dataset/350/default+of+credit+card+clients}}, Shoppers\footnote{\url{https://archive.ics.uci.edu/dataset/468/online+shoppers+purchasing+intention+dataset}}, Magic\footnote{\url{https://archive.ics.uci.edu/dataset/159/magic+gamma+telescope}}, Beijing\footnote{\url{https://archive.ics.uci.edu/dataset/381/beijing+pm2+5+data}}, and News\footnote{\url{https://archive.ics.uci.edu/dataset/332/online+news+popularity}}, which contains varies number of numerical and categorical features. The statistics of the datasets are presented in Table~\ref{tbl:exp-dataset}.
\begin{table}[!h] 
    \vspace{-0.5cm}
    \centering
    \caption{Dataset statistics.} 
    \label{tbl:exp-dataset}
    \small
    \begin{threeparttable}
    {
    \resizebox{\columnwidth}{!}
    {
	\begin{tabular}{lcccccccc}
            \toprule[0.8pt]
            \textbf{Dataset} & \# Rows  & \# Continuous & \# Discrete  &  \# Target & \# Train &  \# Test & Task  \\
            \midrule 
            \textbf{Adult} & $32,561$ & $6$ & $8$ & $1$ & $22,792$ & $16,281$ & Classification  \\
            \textbf{Default} & $30,000$ & $14$ & $10$ & $1$ & $27,000$ & $3,000$  & Classification   \\
            \textbf{Shoppers} & $12,330$ & $10$ & $7$ & $1$ & $11,098$ & $1,232$ & Classification   \\
            \textbf{Magic} & $19,021$ & $10$ & $1$ & $1$ & $17,118$ & $1,903$ & Classification  \\
            \textbf{Beijing} & $43,824$ & $7$ & $5$  & $1$& $39,441$ & $4,383$ &  Regression   \\
            \textbf{News} & $39,644$ & $46$ & $2$  & $1$& $35,679$ & $3,965$ & Regression \\
		\bottomrule[1.0pt] 
		\end{tabular}
   }
  }        
  \end{threeparttable}
\end{table}

In Table~\ref{tbl:exp-dataset}, \textbf{\# Rows} refers to the total records in each dataset, while \textbf{\# Continuous} and \textbf{\# Discrete} denote the count of numerical and categorical features, respectively. The \textbf{\# Target} column indicates whether the prediction task involves a continuous (regression) or discrete (classification) target variable. All datasets except Adult are partitioned into training and testing sets using a 9:1 ratio, with splits generated using a fixed random seed for reproducibility. The Adult dataset uses its predefined official testing set. For evaluating Machine Learning Efficiency (MLE), the training data is further subdivided into training and validation subsets with an 8:1 ratio, ensuring consistent evaluation protocols across experiments.

\subsection{Fidelity Metrics}\label{appendix:metric}
The fidelity metrics used in this paper (Marginal, Correlation, C2ST and MLE) are standard metrics in the field of tabualr data synthesis. Here is a reference:
\begin{itemize}
    \item Marginal: Appendix E.3.1 in~\citep{tabsyn}.
    \item Correlation: Appendix E.3.2 in~\citep{tabsyn}.
    \item C2ST: Appendix F.3 in~\citep{tabsyn}.
    \item MLE:  Appendix E.4 in~\citep{tabsyn}.
\end{itemize}

Below is a summary of how these metrics work.

\subsubsection{Marginal Distribution}
The \textbf{Marginal} metric assesses how well the marginal distribution of each column is preserved in the synthetic data. For continuous columns, we use the Kolmogorov–Smirnov Test (KST); for categorical columns, we use the Total Variation Distance (TVD).

\paragraph{Kolmogorov–Smirnov Test (KST)}  
Given two continuous distributions $p_r(x)$ and $p_s(x)$ (real and synthetic, respectively), the KST measures the maximum discrepancy between their cumulative distribution functions (CDFs):
\begin{equation}
    \mathrm{KST} = \sup_{x} \left| F_r(x) - F_s(x) \right|,
\end{equation}
where $F_r(x)$ and $F_s(x)$ denote the CDFs of $p_r(x)$ and $p_s(x)$:
\begin{equation}
    F(x) = \int_{-\infty}^x p(x) \, \mathrm{d}x.
\end{equation}

\paragraph{Total Variation Distance (TVD)}  
TVD measures the difference between the categorical distributions of real and synthetic data. Let $\Omega$ be the set of possible categories in a column. Then:
\begin{equation}
    \mathrm{TVD} = \frac{1}{2} \sum_{\omega \in \Omega} \left| R(\omega) - S(\omega) \right|,
\end{equation}
where $R(\cdot)$ and $S(\cdot)$ denote the empirical probabilities in real and synthetic data, respectively.

\vspace{0.5em}
\subsubsection{Correlation}
The \textbf{Correlation} metric evaluates whether pairwise relationships between columns are preserved.

\paragraph{Pearson Correlation Coefficient}  
For two continuous columns $x$ and $y$, the Pearson correlation coefficient is defined as:
\begin{equation}
    \rho_{x, y} = \frac{\mathrm{Cov}(x, y)}{\sigma_x \sigma_y},
\end{equation}
where $\mathrm{Cov}(\cdot)$ is the covariance and $\sigma$ denotes standard deviation.  
We evaluate the preservation of correlation by computing the mean absolute difference between correlations in real and synthetic data:
\begin{equation}
    \text{Pearson Score} = \frac{1}{2} \mathbb{E}_{x,y} \left| \rho^R(x, y) - \rho^S(x, y) \right|,
\end{equation}
where $\rho^R$ and $\rho^S$ denote correlations in real and synthetic data. The score is scaled by $\frac{1}{2}$ to ensure it lies in $[0,1]$. Lower values indicate better alignment.

\paragraph{Contingency Similarity}  
For categorical columns $A$ and $B$, we compute the Total Variation Distance between their contingency tables:
\begin{equation}
    \text{Contingency Score} = \frac{1}{2} \sum_{\alpha \in A} \sum_{\beta \in B} \left| R_{\alpha,\beta} - S_{\alpha,\beta} \right|,
\end{equation}
where $R_{\alpha,\beta}$ and $S_{\alpha,\beta}$ are the joint frequencies of $(\alpha, \beta)$ in the real and synthetic data, respectively.

\vspace{0.5em}
\subsubsection{Classifier Two-Sample Test (C2ST)}
C2ST evaluates how distinguishable the synthetic data is from real data. If a classifier can easily separate the two, the synthetic data poorly approximates the real distribution. We adopt the implementation provided by the SDMetrics library.\footnote{\url{https://docs.sdv.dev/sdmetrics/metrics/metrics-in-beta/detection-single-table}}

\vspace{0.5em}
\subsubsection{Machine Learning Efficiency (MLE)}
MLE evaluates the utility of synthetic data for downstream machine learning tasks. Each dataset is split into training and testing subsets using real data. Generative models are trained on the real training set, and a synthetic dataset of equal size is sampled.

For both real and synthetic data, we use the following protocol:
\begin{itemize}
  \item Split the training set into train/validation with an $8{:}1$ ratio.
  \item Train a classifier/regressor on the train split.
  \item Tune hyperparameters based on validation performance.
  \item Retrain the model on the full training set using the optimal hyperparameters.
  \item Evaluate on the real test set.
\end{itemize}

This process is repeated over $20$ random train/validation splits. Final scores (AUC for classification task or RMSE for regression task) are averaged over the $20$ trials for both real and synthetic training data.
In our experiments, we report the MLE Gap which is the difference between the MLE score of the (unwatermarked) real data and the MLE score of the synthetic data.

\subsection{Watermark Detection Metrics}
For watermark detection metrics, we primaryly use the area under the curve (AUC) of the receiver operating characteristic (ROC) curve: \textbf{AUC}, and the True Positive Rate (TPR) at a given False Positive Rate (FPR): \textbf{TPR@x$\%$FPR}.

\paragraph{$z$-statistic}
In addition, we can formalize a statistical test for watermark detection for \modelname. Specifically, consider a table $T$ containing $N$ samples (rows) $\rvx_1, \ldots, \rvx_N$. Recall that during watermarking, each row is assigned a binary score of $0$ or $1$ based on a pseudorandom function, and the row that scores higher is kept. Therefore, for a watermarked table, the total count of rows with a score $1$, denoted by $|W|$, is expected to be significantly higher than random chance. To statistically validate this, we formulate watermark detection as a hypothesis testing problem:
\begin{equation}\nonumber
    \begin{aligned}
        H_0: \text{The table is generated without watermarking.} \\
        \text{vs. } H_1: \text{The table is generated with watermarking.}
    \end{aligned}
\end{equation}
Under the null hypothesis, $|W|$ follows a binomial distribution with mean $\mu = N/2$ and variance $\sigma^2 = N/4$. The standardized $z$-statistic is computed as: 
\begin{equation}\nonumber
    z = \frac{|W| -\mu}{\sigma} = \frac{|W| - N/2}{\sqrt{N/4}}.
\end{equation}
We perform a \textit{one-tailed} test (upper tail) since the alternative hypothesis predicts $|W|>N/2$. The $z$-statistic is compared against a critical value $z_{\alpha}$ corresponding to a desired significance level $\alpha$ (e.g. $\alpha=0.05$ yields $z_{\alpha}=1.645$). If $z>z_{\alpha}$, we reject the null hypothesis and conclude that the table is watermarked.

\section{Ommited Proofs in \Cref{sec:methods}}
Recall that for a table $T$ (wateramarked or unwatermarked) with $N$ rows: $\rvx_1, \dots, \rvx_N$, we define the watermark detection score as 
\begin{equation}
  \label{eq:watermark_detection_score}
    S(T) = \frac{1}{N} \sum_{i=1}^N s_k(\rvx_i),
\end{equation}
where $s_k(\rvx_i)$ is the score of the $i$-th sample, $k$ is the fixed watermark key.

\falsePositiveRateBound*
\begin{proof}
Let $S(T_{\mathrm{no\text{-}wm}}) = \sum_{i=1}^N \rc_i$ denote the sum of $N$ i.i.d. scores from the unwatermarked table, where each $\rc_i = s_k(\rvx_i)$ for $\rvx_i \sim p(\rvx)$, and similarly let $S(T_{\mathrm{wm}}) = \sum_{i=1}^N \rc'_i$ denote the sum of $N$ i.i.d. scores from the watermarked table, where each $\rc'_i = \max\{s_k(\rvx_{i1}), \dots, s_k(\rvx_{im})\}$ with $\rvx_{ij} \sim p(\rvx)$.

Define the expected values:
\[
\mu_{\mathrm{no\text{-}wm}} = \mathbb{E}[\rc_i], \quad
\mu_{\mathrm{wm}}^m = \mathbb{E}[\rc'_i].
\]

We are interested in bounding the false positive rate:
\[
\Pr(S(T_{\mathrm{no\text{-}wm}}) > S(T_{\mathrm{wm}})) = \Pr\left( \sum_{i=1}^N (\rc_i - \rc'_i) > 0 \right).
\]

Let $\rw_i = \rc_i - \rc'_i$. Since $s_k(x) \in [0,1]$, we have $\rc_i \in [0,1]$ and $\rc'_i \in [0,1]$, so $\rw_i \in [-1,1]$. Moreover, $\mathbb{E}[\rw_i] = \mu_{\mathrm{no\text{-}wm}} - \mu_{\mathrm{wm}}^m =: -\delta$, where $\delta = \mu_{\mathrm{wm}}^m - \mu_{\mathrm{no\text{-}wm}} > 0$.

We apply Hoeffding's inequality to the sum of $\rw_i$'s:
\[
\Pr\left( \sum_{i=1}^N \rw_i > 0 \right) = \Pr\left( \sum_{i=1}^N \rw_i - \mathbb{E}[\sum_{i=1}^N \rw_i] > N \delta \right)
\le \exp\left( -\frac{2 N^2 \delta^2}{4N} \right).
\]

Plug in the definition of $\delta$, we have:
\[
\Pr(S(T_{\mathrm{no\text{-}wm}}) > S(T_{\mathrm{wm}})) \le \exp\left( -\frac{N^2 \delta^2}{2} \right)
= \exp\left( -\frac{N (\mu_{\mathrm{wm}}^m - \mu_{\mathrm{no\text{-}wm}})^2}{2} \right).
\]
which proves the result.
\end{proof}

\optimalScoringDistribution*
\begin{proof}
  Let $s_1, \dots, s_m$ be $i.i.d.$ copies of a random variable $s_k(\rvx) \in [0,1]$ with fixed mean $\mathbb{E}[s_k(\rvx)] = 0.5$. Define:
  \[
  \mu := \mathbb{E}[s_k(\rvx)] = 0.5, \quad \mu_{\max} := \mathbb{E}[\max(s_1, \dots, s_m)].
  \]
  Let $\Delta := \mu_{\max} - \mu$ be the gap between the expected maximum score over $m$ repetitions and the mean score. The upper bound in \Cref{eq:fpr_bound} is:
  \[
  \Pr(S_{\text{no-wm}} > S_{\text{wm}}) \le \exp\left( -\frac{N \Delta^2}{2} \right),
  \]
  so minimizing the FPR corresponds to maximizing $\Delta$ under the constraint that $\mathbb{E}[s_k(\rvx)] = 0.5$ and $s_k(\rvx) \in [0,1]$.
  
  We now show that $\Delta$ is maximized when $s_k(\rvx) \sim \text{Bernoulli}(0.5)$.
  
  \paragraph{Step 1: Write $\mu_{\max}$ and $\mu$ as integrals over the CDF.}
  Let $F$ be the cumulative distribution function (CDF) of $s_k(\rvx)$. Then the CDF of $\max(s_1, \dots, s_m)$ is $F^m(x)$. 
  By the tail integration formula, we can compute the expected maximum as:
  \begin{align*}
  \mu_{\max} 
  &= \int_0^1 \mathrm{Pr}(\max(s_1, \dots, s_m) > x) \,  \\
  &= \int_0^1 (1 - F(x)^m) \, dx.
  \end{align*}
  
  Similarly, we have: $\mu = \int_0^1 (1 - F(x)) \, dx$.

  Therefore, the gap $\Delta$ can be written as:
  \[
  \Delta = \mu_{\max} - \mu = \int_0^1 \left[F(x) - F(x)^m\right] dx.
  \]
  
  \paragraph{Step 2: Leverage the concavity.}
  By \Cref{lem:concavity}, the integrand $F(x) - F(x)^m$ is concave in $F(x)$. By \Cref{lem:extremal_support}, the integral is maximized when $F(x)$ is the CDF of a Bernoulli distribution with mean $\mu = 0.5$.

  Therefore, among all $s_k(\rvx) \in [0,1]$ with $\mathbb{E}[s_k(\rvx)] = 0.5$, the Bernoulli(0.5) distribution maximizes $\Delta$, which minimizes the upper bound on the FPR. Hence, the lemma holds.
  \end{proof}

\minimumWatermarkingSignal*
\begin{proof}
  When $s_k(\rvx) \sim \text{Bernoulli}(0.5)$, we have:
  \[
  \mu_{\mathrm{no\text{-}wm}} = \mathbb{E}[s_k(\rvx)] = 0.5,
  \quad \mu_{\mathrm{wm}}^m = \mathbb{E}[\max(s_1, \dots, s_m)] = 1 - 0.5^m.
  \]
  Plug in into the FPR bound \Cref{eq:upper_bound}, we have:
  \[
  \Pr\left(S(T_{\mathrm{no\text{-}wm}}) > S(T_{\mathrm{wm}})\right)
  \le \exp\left( -\frac{N}{2} \left(0.5 - 0.5^m\right)^2 \right),
  \]
  which completes the proof.
\end{proof}

\section{Technical Lemmas}
\begin{lemma}
  \label{lem:concavity}
  For any integer $m \ge 2$, the function $f(x) = x - x^m$ is concave on the interval $[0,1]$.
\end{lemma}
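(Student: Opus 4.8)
The plan is to establish concavity by the standard second-derivative test, since $f$ is a smooth polynomial on $[0,1]$. A twice-differentiable function on an interval is concave there if and only if its second derivative is non-positive throughout, so it suffices to compute $f''$ and check its sign.

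First I would differentiate twice. We have $f(x) = x - x^m$, so $f'(x) = 1 - m x^{m-1}$ and hence
\begin{equation}
  f''(x) = -m(m-1)\, x^{m-2}.
\end{equation}
Next I would check the sign of $f''$ on $[0,1]$. For any integer $m \ge 2$ the coefficient $m(m-1)$ is strictly positive, and $x^{m-2} \ge 0$ for all $x \in [0,1]$ (when $m = 2$ this exponent is $0$, so $x^{m-2} \equiv 1$, and when $m > 2$ the power of a nonnegative base is nonnegative). Therefore $f''(x) \le 0$ on the whole interval, which yields concavity.

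There is essentially no obstacle here; the only point worth stating explicitly is the edge case $m = 2$, where $x^{m-2} = x^0 = 1$ and $f$ is the strictly concave parabola $x - x^2$, and the boundary behavior at $x = 0$ for $m > 2$, where $f''(0) = 0$ but $f''$ remains $\le 0$ elsewhere, so concavity on the closed interval still holds. I would conclude by invoking the second-derivative characterization of concavity to finish the proof.
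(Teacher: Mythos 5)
Your proof is correct and follows exactly the same route as the paper's: compute $f'(x) = 1 - m x^{m-1}$ and $f''(x) = -m(m-1)x^{m-2}$, then observe that $m(m-1) > 0$ and $x^{m-2} \ge 0$ on $[0,1]$ to conclude $f'' \le 0$. Your explicit handling of the $m=2$ edge case is a small addition beyond the paper's argument but changes nothing substantive.
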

\begin{proof}
  To prove that $f(x) = x - x^m$ is concave on $[0,1]$, we show that its second derivative is non-positive on this interval.
  
  Compute the first derivative:
  \[
  f'(x) = \frac{d}{dx}(x - x^m) = 1 - m x^{m-1}.
  \]
  
  Compute the second derivative:
  \[
  f''(x) = \frac{d}{dx}(1 - m x^{m-1}) = -m(m-1) x^{m-2}.
  \]
  
  Observe that for all $x \in [0,1]$ and $m \ge 2$: $m(m-1) > 0$ and $x^{m-2} \ge 0$.
  
  Therefore,
  \[
  f''(x) = -m(m-1)x^{m-2} \le 0 \quad \text{for all } x \in [0,1].
  \]
  
  Hence, $f(x)$ is concave on $[0,1]$.
\end{proof}
  
\begin{lemma}
  \label{lem:extremal_support}
  Let $\phi : [0,1] \to \mathbb{R}$ be a concave function, and let $F$ be the cumulative distribution function (CDF) of a random variable supported on $[0,1]$ with fixed mean $\mu \in (0,1)$.
  Then the integral
  \[
  \int_0^1 \phi(F(x)) dx
  \]
  is maximized when $F(x)= \begin{cases} 0 & \text{if } x < 0 \\ 1-\mu & \text{if } 0 \le x < 1 \\ 1 & \text{if } x \ge 1 \end{cases}$, i.e. the CDF of a Bernoulli distribution with mean $\mu$.
\end{lemma}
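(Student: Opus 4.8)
The plan is to reduce the optimization over CDFs to a single application of Jensen's inequality, after translating the mean constraint into a statement about the \emph{average value} of $F$ on $[0,1]$. The whole argument hinges on observing that the proposed Bernoulli CDF is constant on $[0,1)$, which is exactly the configuration that saturates Jensen's bound.

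First I would rewrite the mean constraint using the tail formula. For a random variable supported on $[0,1]$ with CDF $F$, we have $\mu = \int_0^1 (1 - F(x))\,dx$, so the constraint $\mathbb{E}[X] = \mu$ is equivalent to $\int_0^1 F(x)\,dx = 1 - \mu$. This is the only way the mean enters the problem: the objective $\int_0^1 \phi(F(x))\,dx$ depends on $F$ solely through its pointwise values, while admissibility amounts to $F$ being nondecreasing, $[0,1]$-valued, and having integral $1-\mu$.

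Second, I would view Lebesgue measure on $[0,1]$ as a probability measure and treat $y = F(x)$ as a random variable with $x$ uniform on $[0,1]$. Since $\phi$ is concave, Jensen's inequality yields $\int_0^1 \phi(F(x))\,dx \le \phi\!\left(\int_0^1 F(x)\,dx\right) = \phi(1-\mu)$. Crucially, this upper bound holds for \emph{every} measurable $F : [0,1] \to [0,1]$ with average $1-\mu$, not merely the monotone ones, so it bounds the objective over the entire feasible set (and more). Finally I would verify that the proposed Bernoulli CDF attains the bound: it equals the constant $1-\mu$ on all of $[0,1)$ and jumps to $1$ only at the single point $x=1$, which has Lebesgue measure zero, so $\int_0^1 \phi(F(x))\,dx = \phi(1-\mu)$. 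Being a legitimate CDF with mean $\mu$ that meets the Jensen bound with equality, it is therefore a maximizer over all admissible $F$, proving the lemma.

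I expect the only subtlety — rather than a genuine obstacle — to be the bookkeeping around the atom at $x=1$ and the confirmation that passing from ``all measurable $F$'' to ``monotone CDFs'' loses nothing, which holds precisely because the maximizer of the larger problem is itself a valid CDF. If uniqueness were desired, strict concavity of $\phi$ would force equality in Jensen only when $F$ is constant almost everywhere, i.e. exactly the Bernoulli CDF; I would note this, but it is not required for the stated claim.
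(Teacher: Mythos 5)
Your proof is correct and follows essentially the same route as the paper's: rewrite the mean constraint via the tail integration formula as $\int_0^1 F(x)\,dx = 1-\mu$, apply Jensen's inequality for the concave $\phi$ to get the upper bound $\phi(1-\mu)$, and verify that the Bernoulli CDF (constant $1-\mu$ on $[0,1)$) attains it. Your added remarks on the measure-zero atom at $x=1$ and on uniqueness under strict concavity are sound but not needed for the claim.
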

  
\begin{proof}
  \textbf{Step 1: Rewrite the Mean Constraint}
  
  By the tail integration formula, the mean constraint for the random variable $X$ with CDF $F(x)$ supported on $[0,1]$ is:
  $$ \int_0^1 (1 - F(x)) \,dx = \mu. $$
  Rearranging this equation gives the integral of $F(x)$:
  \begin{equation} 
  \label{eq:mean_constraint}
  \int_0^1 F(x) \,dx = 1 - \mu.
  \end{equation}
  
  \textbf{Step 2: Upper Bound the Integral}
  
  The function $\phi : [0,1] \to \mathbb{R}$ is concave. The CDF $F(x)$ takes values in $[0,1]$ for $x \in [0,1]$, so $\phi(F(x))$ is well-defined. We can apply Jensen's inequality for integrals, which for a concave function $\phi$ and an integrable function $g(x)$ on an interval $[a,b]$ states:
  $$ \frac{1}{b-a} \int_a^b \phi(g(x)) \,dx \le \phi\left(\frac{1}{b-a} \int_a^b g(x) \,dx\right). $$
  Plug in $a=0$, $b=1$, $g(x) = F(x)$. Jensen's inequality then becomes:
  $$ \int_0^1 \phi(F(x)) \,dx \le \phi\left(\int_0^1 F(x) \,dx\right). $$
  Substituting \Cref{eq:mean_constraint} into the right hand side, we have:
  \begin{equation}
  \label{eq:upper_bound}
  \int_0^1 \phi(F(x)) \,dx \le \phi(1-\mu).
  \end{equation}
  
\textbf{Step 3: Verify $F(x)$ achieves the upper bound}

It is straightforward to verify that $F(x)$ satisfies the mean constraint. Next, we will show that $F(x)$ achieves the upper bound $\phi(1-\mu)$.
For $x \in [0,1)$, $F(x) = 1-\mu$. Therefore, we have:
$$ \int_0^1 \phi(F(x)) \,dx = \int_0^1 \phi(1-\mu) \,dx = \phi(1-\mu). $$

We have shown that $F(x)$ satisfies the mean constraint and achieves the upper bound $\phi(1-\mu)$, which completes the proof.
\end{proof}

\end{document}